\documentclass{article}

\usepackage{microtype}
\usepackage{multirow,multicol}
\usepackage{graphicx,makecell}
\usepackage{diagbox}
\usepackage{subcaption}
\usepackage{booktabs} 

\usepackage[utf8]{inputenc}
\usepackage{xcolor}
\usepackage{amsmath}
\usepackage{amsthm}
\usepackage{hyperref}
\usepackage{amssymb}
\usepackage{graphicx}
\usepackage{epstopdf}
\usepackage{enumitem}
\usepackage{cleveref}
\usepackage{algorithm}
\usepackage{algorithmic}
\usepackage{float}
\usepackage{mathrsfs}
\usepackage{type1cm}
\usepackage{mathtools}
\usepackage{nicematrix}

\usepackage{hyperref}

\usepackage[accepted]{icml2026}

\usepackage{amsmath}
\usepackage{amssymb}
\usepackage{mathtools}
\usepackage{amsthm}
\usepackage{booktabs}
\usepackage{makecell} 

\theoremstyle{plain}
\newtheorem{theorem}{Theorem}[section]
\newtheorem{proposition}[theorem]{Proposition}
\newtheorem{lemma}[theorem]{Lemma}

\theoremstyle{definition}
\newtheorem{definition}[theorem]{Definition}

\theoremstyle{remark}
\newtheorem{remark}[theorem]{Remark}

\newtheorem*{theorem-non}{Theorem}

\usepackage[textsize=tiny]{todonotes}

\icmltitlerunning{Local Minima in Quadratic-Penalty Relaxations of Binary Linear Programs}

\begin{document}

\twocolumn[
  \icmltitle{Local Minima in Quadratic-Penalty Relaxations of Binary Linear Programs}



  \icmlsetsymbol{equal}{*}

  \begin{icmlauthorlist}
    \icmlauthor{Cheng-Han Huang}{yyy}
    \icmlauthor{Yongliang Sun}{yyy}
    \icmlauthor{Chaoyan Huang}{yyy,sch3}
    \icmlauthor{Ismail Alkhouri}{sch1,sch2}
    \icmlauthor{Rongrong Wang}{yyy,comp}
  \end{icmlauthorlist}

  \icmlaffiliation{yyy}{Department of Computational Mathematics, Science, \& Engineering, Michigan State University}
  \icmlaffiliation{sch1}{XCP, Los Alamos National Laboratory}
  \icmlaffiliation{sch2}{Michigan Institute for Computational Discovery and Engineering, University of Michigan}
  \icmlaffiliation{sch3}{Department of Electrical Engineering and Computer Science, University of Michigan}
  \icmlaffiliation{comp}{Department of Mathematics, Michigan State University}

  \icmlcorrespondingauthor{Cheng-Han Huang}{huang248@msu.edu}
  \icmlcorrespondingauthor{Rongrong Wang}{wangron6@msu.edu}

  \icmlkeywords{Machine Learning, ICML}

  \vskip 0.3in
]



\printAffiliationsAndNotice{}  

\begin{abstract}
  Many combinatorial optimization problems admit quadratic unconstrained binary formulations (QUBO) which can often be relaxed to the box $[0,1]^n$ and optimized using scalable gradient-based methods. However, the resulting non-convex landscape can often contain local optima that are spurious or infeasible. In this paper, we establish sufficient structural conditions on quadratic penalties that rule out these failures, guaranteeing that every local minimizer of the relaxed problem is both binary and feasible. For each problem we study, we examine existing QUBO formulations when available, identify why they fail when they do, and propose alternative relaxed QUBOs that satisfy our conditions. We show for several common combinatorial problems, including open-pit mining, 0--1 knapsack, and traveling salesman formulations, that these constructions allow gradient-based methods such as projected gradient descent and Adam to be safely applied to obtain valid binary solutions. Our results clarify when differentiable optimization is a reliable local solver for quadratic combinatorial objectives.
\end{abstract}

\section{Introduction}

Quadratic unconstrained binary optimization (QUBO) formulations provide a unifying framework
for a wide range of combinatorial optimization problems \cite{kochenberger2014unconstrained}, including maximum independent set \cite{karp2009reducibility},
knapsack variants \cite{martello1990knapsack}, open-pit mining \cite{lerchs1965optimum}, and the traveling salesman problem \cite{dantzig1954solution}.
Traditionally, QUBOs have been solved using discrete optimization techniques such as local
bit-flip heuristics \cite{benlic2013breakout}, simulated annealing \cite{kirkpatrick1983optimization}, branch-and-bound methods \cite{clausen1999branch}, or quantum-inspired
algorithms \cite{glover2022quantum}.
In these settings, optimization is performed directly over binary variables, and feasibility
or optimality is determined purely by objective values on discrete configurations. For a QUBO encoding of a constrained combinatorial problem, feasibility refers to satisfaction of the original discrete constraints encoded by the formulation.

Recently, however, there has been growing interest in optimizing relaxed QUBOs over the
continuous domain $[0,1]^n$ using scalable gradient-based methods \cite{alkhouri2025differentiable,alkhouri2026MaxCut, sun2026mutationguided}.
This approach is attractive due to its compatibility with modern hardware (GPU/TPU) and automatic
differentiation frameworks \cite{baydin2018automatic}.
Nevertheless, a fundamental obstacle remains:
local minimizers of relaxed QUBOs are not guaranteed to correspond to valid binary and feasible
solutions of the original combinatorial problem. 
Even when constraint violations are penalized with arbitrarily large weights, relaxed QUBOs
may admit fractional or infeasible \emph{strict local minima} under box constraints.
This phenomenon cannot be explained by insufficient optimization or poor initialization;
rather, it reflects intrinsic geometric properties of the nonconvex penalty landscape.
As a result, continuous optimization of relaxed QUBOs is often regarded as unreliable for
exact combinatorial solving.
\vspace{-0.3cm}
\paragraph{Central question.}
This paper addresses the following question:
\emph{When can gradient-based optimization of a relaxed QUBO be trusted as a local solver
for the original discrete problem?}
More precisely, we seek structural conditions under which every local minimizer of a
box-relaxed QUBO is both binary and feasible.\\ Importantly, we do not require global optimality, convexity, or exact penalty equivalence.
Our focus is exclusively on \emph{local minima} under box constraints, which govern the
behavior of projected gradient descent (PGD) and related first-order methods.
\vspace{-0.1cm}
\paragraph{Why large penalties are insufficient.}
A common intuition is that infeasible solutions can always be eliminated by increasing the
penalty parameter \cite{nocedal2006numerical}.
We show that this intuition is false in the continuous relaxation.
Specifically, we identify a failure mode in which the penalty landscape becomes ``trapped'' at the boundary because the gradient points ``outward'' from the feasible region, rather than toward it.
Such a point becomes a strict local minimizer regardless of how large the penalty weight is.

This phenomenon arises in standard QUBO formulations, including widely used penalties for
precedence constraints in open-pit mining.
It demonstrates that feasibility at local minima is not determined by penalty magnitude alone,
but by the \emph{first-order geometry} of the penalty.

\paragraph{Our approach.}
We show that the correctness of local minima in relaxed QUBOs can be guaranteed by simple,
verifiable structural properties of the penalty.
Our analysis identifies three key ingredients:
\begin{itemize}
\item diagonal-free quadratic penalties on core variables,
\item  integer-valued penalty gradients at binary points, and
\item a local repairability condition ensuring that every infeasible binary point admits a
feasible descent direction for the penalty with a uniform margin. 
\end{itemize}

Under these conditions, we prove that for sufficiently large (but finite) penalty weight,
every local minimizer of the relaxed problem is both binary and feasible.
These guarantees are purely local, require no convexity assumptions, and apply directly to
projected gradient methods.

\paragraph{Implications.}
Our framework explains both successes and failures of existing QUBO formulations.
It recovers known guarantees for maximum independent set as a special case (such as those in \cite{mahdavi2013characterization,alkhouri2025differentiable}), explains why
common formulations for open-pit mining can fail even with infinite penalty, and motivates
new penalty constructions for knapsack and traveling salesman problems that admit feasible
local minima.
Rather than proposing problem-specific heuristics, our results clarify which geometric properties any QUBO penalty must satisfy to be compatible with gradient-based optimization.


This paper makes the following contributions:
\begin{itemize}\setlength{\itemsep}{-3.0pt}
  \item \textbf{Structural obstruction.} We identify a fundamental failure mode in relaxed QUBOs: infeasible binary points that are first-order stationary under box constraints and therefore remain strict local minima for arbitrarily large penalty weights.
  This shows that increasing the penalty magnitude alone is insufficient to guarantee feasibility
  under continuous optimization.
  \item \textbf{General local correctness theory.} 
  We establish problem-independent conditions under which every local minimizer of a relaxed
  QUBO is binary and feasible.
  Our results rely on diagonal-free quadratic penalties with integer coefficients and a local
  repairability property that rules out infeasible stationary points.
  \item \textbf{Binarity and feasibility guarantees.} We prove separate theorems showing that sufficiently large penalties eliminate fractional local
  minima and infeasible binary local minima, yielding finite, explicit penalty thresholds.
  \item \textbf{Explanatory power for existing formulations.} Our framework recovers known guarantees for maximum independent set, explains the failure of
  commonly used penalties for open-pit mining, and clarifies when standard QUBO constructions
  are incompatible with gradient-based methods.
  \item \textbf{New penalty constructions.} Guided by our theory, we propose alternative QUBO penalties for knapsack and traveling salesman
  formulations that satisfy the required structural conditions and admit only valid local minima
  under continuous relaxation.
\end{itemize}

\subsection{Related studies}
Classical QUBO formulations prioritize binary-domain correctness, as they are tailored for discrete solvers (e.g., branch-and-bound, quantum heuristics) \cite{glover2022applications}. However, they often fail under gradient-based continuous relaxation because they do not account for the first-order geometry of the box $[0,1]^n$. This work analyzes local minima in relaxed QUBOs and identifies structural penalty properties that ensure local correctness. By bridging discrete formulations with continuous optimization, we provide a theoretical foundation for using gradient methods as reliable combinatorial solvers.

The paper in \cite{burer2009nonconvex} studies nonconvex quadratic programming problems with box constraints and shows that finding a global optimum is NP-hard. It also provides a detailed characterization of the extreme points of related linearized formulations. While both \cite{burer2009nonconvex} and our work analyze properties of box-constrained quadratic objectives, our paper focuses on their behavior under projected gradient-based optimization methods.

In recent studies \cite{alkhouri2025differentiable, alkhouri2026MaxCut}, the authors proposed using differentiable solvers for relaxed QUBOs applied to the Maximum Independent Set (MIS) and Maximum Cut (MaxCut) problems. While these works characterize binary local optima, including stationary points, their analyses are inherently problem-specific. In contrast, our paper generalizes the analysis of local optima for relaxed QUBOs by deriving problem-agnostic properties and, in addition, proposing new theory-guided formulations.


\section{Main Theory: Local Correctness of Relaxed QUBOs}
\subsection{Problem Setting}

We consider relaxed QUBO objectives of the form
\begin{equation}\label{Qubo}
\hat f(z) = w^\top z + \gamma V(z), \quad z \in [0,1]^{N}, \quad N:=n+m,
\end{equation}
where $z= (x,y)$ collects decision variables $x \in \mathbb{R}^n$ (i.e., the variables that appear in the linear objective term) and possible auxiliary variables $y \in \mathbb{R}^m$,
$w$ is an expanded linear objective on the variables, with weight $0$ assigned to all auxiliary variables, and $V(z)$ is a quadratic penalty encoding feasibility constraints. If no auxiliary variables are involved, we simply drop the $y$ term and refer to $z = x$. The parameter $\gamma$ controls the relative strength of the penalty landscape. In particular, variables with zero objective weight do not affect the original objective value. We therefore refer to the decision variables with nonzero objective weight as the \emph{core variables}.

Our goal is to characterize the structural conditions in $V$ in which, provided $\gamma$ is sufficiently large,
\emph{every local minimum of the PGD of $\hat f$ over $[0,1]^N$ corresponds to a binary and feasible solution}.

Let's first review some preliminaries, which will be used in our model design and analysis. 

\begin{definition}[Projected Gradient Descent (PGD)]
Given an initial point $z^{0} \in [0,1]^N$ and a step size sequence
$\{\eta_k\}_{k \ge 0}$ with $\eta_k > 0$, PGD
for minimizing $\hat f$ over $[0,1]^N$ generates iterates
\begin{equation}
    z^{k+1}
= \Pi_{[0,1]^N}\!\left(z^{k} - \eta_k \nabla \hat f(z^{k})\right),
\end{equation}
where $\Pi_{[0,1]^N}$ denotes the Euclidean projection onto the box
$[0,1]^N$, applied componentwise.
\end{definition}

\begin{definition}[Box-stationary point of PGD]
A point $z^\star \in [0,1]^N$ is called a \emph{box-stationary point of PGD} for a continuously differentiable function $f$ if it satisfies the following first-order condition
\[
\langle \nabla f(z^\star), d \rangle \ge 0,
\quad \forall d \in T(z^\star),
\]
where 
$T(z) := \{ d \in \mathbb{R}^N :z_i = 0 \Rightarrow d_i \ge 0,\;
z_i = 1 \Rightarrow d_i \le 0,
\; 0 < z_i < 1 \Rightarrow d_i \in \mathbb{R}\}$, which means $T(z^\star)$ denotes the tangent cone of the box $[0,1]^N$ at $z^\star$.

\end{definition}
When a box-stationary point further satisfies the second order sufficient condition, it is a \emph{box-local minimizer}.
A box-local minimizer admits no feasible first-order descent direction under the box constraint, and therefore any PGD limit
point that is a box-local minimizer is a valid discrete solution.

\begin{definition}[Penalty interaction graph]
\label{def:interaction_graph}
Let
\[
V(z) = \sum_{1\le i\leq j\leq N} Q_{ij} z_i z_j + \sum_{i=1}^N d_i z_i + \mathrm{const},
\]
be a quadratic penalty. 
The \emph{penalty interaction graph} is the undirected graph
$G=(\{1,\dots,N\},E)$, where $(i,j)\in E$ if and only if $Q_{ij}\neq 0$.
\end{definition}

\subsection{Interior Independence of Local Minima}
We first establish a useful lemma showing that under core diagonal-free penalties
(i.e., quadratic penalties with no $z_i^2$ terms on the core variables),
the set of interior coordinates at a PGD local minimum forms an independent
set in the penalty interaction graph.

\begin{lemma}[Interior independence]
\label{lem:interior_independence}
Let $\hat f(z)=w^\top z + \gamma V(z)$ with $V$ diagonal-free on the core variables. Let $\mathcal{C}:=\{i:\ w_i\neq 0\}$ be the set of nonzero-weight (core) indices.
If $z^\star$ is a local minimizer of $\hat f$ over $[0,1]^N$,  $i\in\mathcal C$ is any interior core coordinate, and $Q_{ij}\neq 0$,
then $z_j^\star\in\{0,1\}$.
In particular, core interior coordinates form an independent set in the interaction graph of $V$.

\end{lemma}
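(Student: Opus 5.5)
The plan is to argue by contradiction with a two-coordinate perturbation, using the fact that diagonal-freeness makes $\hat f$ affine in the core coordinate $z_i$. Assume $\gamma>0$, and suppose $z^\star$ is a local minimizer over $[0,1]^N$, $i\in\mathcal C$ with $0<z_i^\star<1$, $Q_{ij}\neq 0$, and also $0<z_j^\star<1$. Note $j\neq i$, because diagonal-freeness on core variables gives $Q_{ii}=0$. The idea is that the restriction of $\hat f$ to the plane spanned by $e_i,e_j$ through $z^\star$ is a quadratic with an indefinite Hessian, so it cannot have a local minimum at an interior point of the box.

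First, I would restrict $\hat f$ to that plane. Define
\[
g(s,t):=\hat f(z^\star+s e_i+t e_j).
\]
Expanding $V$ with the convention $\sum_{k\le l}Q_{kl}z_kz_l$ and using $Q_{ii}=0$, we get
\[
g(s,t)=g(0,0)+a s+b t+\gamma Q_{ij}\, s t+\gamma Q_{jj}\, t^2,
\]
where $a=\partial_i\hat f(z^\star)$ and $b=\partial_j\hat f(z^\star)$. This step is just bookkeeping of which monomials involve $z_i$ or $z_j$. The key point is that there is no $s^2$ term, while $Q_{jj}$ may be nonzero, for instance when $j$ is an auxiliary index.

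Second, I would get the first-order conditions. Both coordinates are interior, so $\pm e_i$ and $\pm e_j$ all lie in $T(z^\star)$. Box-stationarity, which is necessary for a local minimizer, then forces $a=b=0$. Hence
\[
g(s,t)-g(0,0)=\gamma\bigl(Q_{ij}st+Q_{jj}t^2\bigr).
\]
Its Hessian has determinant $-\gamma^2Q_{ij}^2<0$, so it is indefinite.

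Third, I would exhibit an explicit descent direction. Take $t=\varepsilon$ and
\[
s=-\operatorname{sign}(Q_{ij})\,\frac{(|Q_{jj}|+1)\,\varepsilon}{|Q_{ij}|}.
\]
This gives
\[
g(s,t)-g(0,0)=\gamma\varepsilon^2\bigl(-(|Q_{jj}|+1)+Q_{jj}\bigr)\le-\gamma\varepsilon^2<0.
\]
For all sufficiently small $\varepsilon$ the perturbed point stays in $[0,1]^N$, since both coordinates are strictly interior. This yields points arbitrarily close to $z^\star$ with strictly smaller objective, contradicting local minimality. Therefore $z_j^\star\in\{0,1\}$.

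The independence claim follows at once: two interior core coordinates joined by an edge of the interaction graph would contradict what we just proved. I do not expect a serious obstacle. The only care needed is to allow $Q_{jj}\neq0$ when $j$ is non-core, which the choice of $s$ handles, and to make sure the first-order terms vanish before invoking the second-order argument.
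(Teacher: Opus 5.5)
Your proposal is correct and follows essentially the paper's argument: restrict $\hat f$ to the plane through $z^\star$ spanned by $e_i,e_j$, and observe that because $Q_{ii}=0$ the $2\times 2$ Hessian of the restriction has determinant $-\gamma^2Q_{ij}^2<0$. The paper invokes the second-order necessary condition (positive semidefinite Hessian at an unconstrained local minimizer) directly, whereas you first kill the first-order terms and then exhibit an explicit descent direction; your assumption $\gamma>0$ is harmless, since flipping the sign of $s$ handles $\gamma<0$.
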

\subsection{Binarity of Local Minimizers}
We are now ready to show that under diagonal free penalties, a sufficiently large penalty coefficient eliminates fractional local
minimizers on all variables with nonzero linear weight.

\begin{theorem}[Core binarity at local minimizers]
\label{thm:binarity_diagfree}
Let
\[
\hat f(z)=w^\top z+\gamma V(z),\qquad z\in[0,1]^N,
\]
where
\[
V(z)=\sum_{1\le i\leq j\le N} Q_{ij}z_i z_j+\sum_{i=1}^N d_i z_i+\mathrm{const}
\]
is quadratic, diagonal-free on the core variables, and has integer coefficients
$Q_{ij},d_i\in\mathbb Z$.
Let $\mathcal C:=\{i:\, w_i\neq 0\}$ be the set of core indices.
Assume
\begin{equation}
\label{eq:gamma_binarity}
|\gamma|>\max_{i\in\mathcal C}|w_i|.
\end{equation}
Then every local minimizer $z^\star$ of $\hat f$ over $[0,1]^N$ satisfies
\[
z^\star_i\in\{0,1\}
\qquad\text{for all } i\in\mathcal C.
\]
\end{theorem}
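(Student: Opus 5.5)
The plan is to argue by contradiction, one core coordinate at a time, using that $\hat f$ is affine in each core variable. Suppose $z^\star$ is a local minimizer of $\hat f$ over $[0,1]^N$ and some $i\in\mathcal C$ has $0<z^\star_i<1$. Since $V$ is diagonal-free on the core variables, $Q_{ii}=0$. Hence, with all other coordinates frozen at $z^\star$, the map $t\mapsto \hat f(z^\star_{-i},t)$ is affine in $t$, with slope
\[
g_i \;:=\; \partial_i \hat f(z^\star) \;=\; w_i+\gamma\Bigl(d_i+\sum_{j\neq i} Q_{\{ij\}}\, z^\star_j\Bigr),
\]
where $Q_{\{ij\}}$ denotes the coefficient of $z_iz_j$ (whichever of $Q_{ij},Q_{ji}$ is present in the upper-triangular sum). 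Because $z^\star_i$ is interior, both directions $\pm e_i$ lie in the tangent cone $T(z^\star)$, and small moves along them stay in the box. Local minimality therefore forces $g_i=0$; otherwise moving $z_i$ a little against the sign of $g_i$ strictly decreases an affine function.

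Next I will use Lemma~\ref{lem:interior_independence}. For every $j$ with $Q_{\{ij\}}\neq 0$ it gives $z^\star_j\in\{0,1\}$. The terms with $Q_{\{ij\}}=0$ contribute nothing, so
\[
k \;:=\; d_i+\sum_{j\neq i} Q_{\{ij\}}\, z^\star_j
\]
is a sum of integers times $0/1$ values plus the integer $d_i$, hence $k\in\mathbb Z$. Stationarity then reads $w_i+\gamma k=0$.

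The conclusion is a two-case check on $k$:
\begin{itemize}
\item If $k=0$, then $w_i=0$, contradicting $i\in\mathcal C$.
\item If $k\neq 0$, then $|k|\ge 1$, so $|w_i|=|\gamma|\,|k|\ge|\gamma|>\max_{l\in\mathcal C}|w_l|\ge |w_i|$ by \eqref{eq:gamma_binarity}, again a contradiction.
\end{itemize}
Hence no core coordinate of a local minimizer is fractional.

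The only delicate point is the integrality of $k$: it relies on every neighbor of $i$ in the interaction graph being binary. That is exactly what Lemma~\ref{lem:interior_independence} supplies, so I take it as given. Note that auxiliary (non-core) neighbors are covered too, since the lemma constrains every $j$ with $Q_{\{ij\}}\neq0$ whenever $i$ is an interior core coordinate. A minor bookkeeping point is handling the upper-triangular indexing when differentiating, which the symmetric notation $Q_{\{ij\}}$ takes care of.
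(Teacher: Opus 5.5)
Your proposal is correct and follows the paper's proof essentially step for step: assume an interior core coordinate $z_i^\star$, use Lemma~\ref{lem:interior_independence} to make $k=\partial_i V(z^\star)$ an integer, and conclude from the stationarity equation $w_i+\gamma k=0$ together with $|\gamma|>|w_i|$ that $k=0$, which forces $w_i=0$ and contradicts $i\in\mathcal C$. Your extra remarks — justifying stationarity by noting that $\hat f$ is affine in $z_i$ when $Q_{ii}=0$, and writing $Q_{\{ij\}}$ to handle the upper-triangular indexing — are minor presentational refinements of that same argument.
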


\subsection{Feasibility via Local Repairability}

Binarity alone does not guarantee feasibility.
We now show that feasibility of local minima can be enforced by a purely local
geometric condition on the penalty.

\begin{definition}[Local repairability with margin $\delta>0$]
\label{def:repairability}
Let $\mathcal F\subseteq\{0,1\}^{|\mathcal C|}$ denote the feasible set of core assignments, and let
$\mathcal D\subseteq\mathbb R^N$ be a bounded family of candidate directions.
We say that $V$ is \emph{locally repairable with margin $\delta>0$} (relative to $\mathcal D$) if for every
\[
z=(x,y)\in[0,1]^N
\]
whose core satisfies
\[
x\in\{0,1\}^{|\mathcal C|}\setminus \mathcal F,
\]
there exists a direction $d\in \mathcal D\cap T(z)$ such that
\[
\langle \nabla V(z), d\rangle \le -\delta.
\]
If there are no auxiliary variables, simply read $z=x$.
\end{definition}

\begin{theorem}[Feasibility of binary local minimizers]
\label{thm:feasibility_repair}
Consider
\[
\hat f(z)=w^\top z+\gamma V(z),\qquad z\in[0,1]^N,
\]
and assume:
\begin{enumerate}[label=(\roman*),leftmargin=2.2em]
\item (\textbf{Core binarity}) Every local minimizer of $\hat f$ over $[0,1]^N$ has a binary core.
\item (\textbf{Local repairability}) $V$ is locally repairable with margin $\delta>0$ relative to a bounded direction set $\mathcal D$.
\end{enumerate}
Define
\begin{align*}
\label{eq:L_def}
L
:=
\sup\Big\{&
\langle w,d\rangle:
z\in[0,1]^N \text{ has infeasible binary core},\\
&d\in \mathcal D\cap T(z)
\Big\}.
\end{align*}
Then, for every
\[
\gamma>\frac{L}{\delta},
\]
every local minimizer of $\hat f$ over $[0,1]^N$ has feasible core.
\end{theorem}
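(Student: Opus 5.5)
The plan is to argue by contradiction, using only the first-order necessary condition for a local minimizer over the box. Suppose $z^\star$ is a local minimizer of $\hat f$ over $[0,1]^N$ and its core is infeasible. By assumption (i), the core of $z^\star$ is binary. Hence $z^\star$ is a point of $[0,1]^N$ whose core lies in $\{0,1\}^{|\mathcal C|}\setminus\mathcal F$. Definition~\ref{def:repairability} then applies at $z^\star$, since assumption (ii) requires no condition on the auxiliary coordinates $y^\star$. It gives a direction $d\in\mathcal D\cap T(z^\star)$ with $\langle \nabla V(z^\star),d\rangle\le-\delta$.

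Next we bound the directional derivative of $\hat f$ along this $d$. We have
\[
\langle \nabla \hat f(z^\star), d\rangle = \langle w,d\rangle + \gamma\langle \nabla V(z^\star),d\rangle \le L - \gamma\delta .
\]
The bound $\langle w,d\rangle\le L$ holds because the pair $(z^\star,d)$ is admissible in the supremum defining $L$. Since $\mathcal D$ is bounded, $L<\infty$. If $L<0$, the condition $\gamma>L/\delta$ may allow $\gamma\le 0$, and then $\gamma\langle\nabla V,d\rangle\le -\gamma\delta$ would fail. I will therefore treat $\gamma>0$ as implicit, as in the penalty setting (and as Theorem~\ref{thm:binarity_diagfree} effectively uses), and note this in the proof. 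With $\gamma>L/\delta$ and $\gamma>0$, the bound gives $\langle \nabla \hat f(z^\star), d\rangle<0$.

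Finally, we show that this strict negative directional derivative contradicts local minimality. Because the box is a polyhedron, $T(z^\star)$ is exactly the cone of feasible directions: every $d\in T(z^\star)$ satisfies $z^\star+td\in[0,1]^N$ for all sufficiently small $t>0$. By a first-order Taylor expansion, $\hat f$ is quadratic, hence $C^1$, so $\hat f(z^\star+td)=\hat f(z^\star)+t\langle\nabla\hat f(z^\star),d\rangle+o(t)<\hat f(z^\star)$ for small $t>0$. This contradicts local minimality, so the core of $z^\star$ must be feasible.

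The argument is short. The main points needing care are the ones above: that the feasible direction $d$ stays in the box, which holds because $T(z)$ is a cone of polyhedral feasible directions, and that the sign convention on $\gamma$ is handled. The only genuinely substantive input is assumption (i), which lets us apply repairability at $z^\star$.
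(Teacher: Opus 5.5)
Your proposal is correct and follows essentially the same argument as the paper: use core binarity to place $z^\star$ in the scope of local repairability, bound $\langle \nabla \hat f(z^\star), d\rangle \le L-\gamma\delta<0$, and contradict the first-order necessary condition over the box. Your caveat that $\gamma>0$ must be assumed when $L<0$ is a legitimate refinement, since the paper's step $\gamma\langle \nabla V(z^\star),d\rangle\le-\gamma\delta$ silently uses $\gamma\ge 0$.
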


\subsection{A Unified Criterion for Local Correctness}

We now summarize the preceding results into a unified criterion that
guarantees the local correctness of relaxed QUBOs under projected
gradient descent.

Consider the box-constrained optimization problem of \eqref{Qubo}, we have
\[
\min_{z \in [0,1]^N} \;\hat f(z) := w^\top z + \gamma V(z).
\]
%
Our analysis identifies three structural conditions on the penalty $V$
that together ensure local correctness.
\paragraph{(C1) Diagonal-free structure.}
The quadratic penalty $V$ is diagonal-free on the core variables, i.e.,
it contains no $z_i^2$ terms for core indices.
This structural property implies an \emph{interior independence}
phenomenon: at any PGD local minimum, no two interacting variables can
simultaneously lie in the interior of the box.
\vspace{-0.3cm}
\paragraph{(C2) Integer-valued penalty gradients.}
The coefficients of $V$ are integer-valued.
Combined with interior independence, this ensures that the penalty
gradient at any interior core coordinate is integer-valued.
For sufficiently large penalty weight $\gamma$, this eliminates
fractional local minima on all core variables with nonzero linear
weight, yielding core binarity.
\vspace{-0.3cm}
\paragraph{(C3) Local repairability.}
The penalty $V$ is locally repairable with margin $\delta > 0$ relative to
a bounded family of admissible directions.
That is, from any infeasible binary core configuration, there exists a
feasible first-order descent direction that strictly decreases the
penalty.

Under these three conditions, the geometry of the relaxed objective excludes both fractional and infeasible PGD local minima. Hence, we have the following main theorem. 

\begin{theorem}[Unified criterion for local correctness]
\label{thm:unified}
Assume that the quadratic penalty $V$ satisfies \emph{(C1)}--\emph{(C3)}
above.
Then there exists a finite threshold $\gamma^\star > 0$ such that for
all $\gamma > \gamma^\star$, every box-local minimizer of $\hat f$ over $[0,1]^N$ has a binary core belonging to the projected feasible set of the
original combinatorial problem.
\end{theorem}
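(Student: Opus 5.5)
The plan is to chain Theorem~\ref{thm:binarity_diagfree} and Theorem~\ref{thm:feasibility_repair}, taking the threshold as the larger of the two penalty thresholds. Concretely, I would set
\[
\gamma^\star := \max\Big\{\max_{i\in\mathcal C}|w_i|,\ \frac{L}{\delta},\ \epsilon_0\Big\},
\]
where $L$ is the supremum defined in Theorem~\ref{thm:feasibility_repair} and $\epsilon_0>0$ is any fixed constant that keeps $\gamma^\star$ strictly positive. For any $\gamma>\gamma^\star$ we have $\gamma>0$, so $|\gamma|=\gamma$ and condition \eqref{eq:gamma_binarity} holds.

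\textbf{Step 1: $\gamma^\star$ is finite.} The first term is a maximum over finitely many indices. For the second term, I need $L<\infty$. By Cauchy--Schwarz, $\langle w,d\rangle\le \|w\|\,\sup_{d\in\mathcal D}\|d\|$, and the right-hand side is finite because (C3) assumes $\mathcal D$ is bounded. If the index set in the definition of $L$ is empty, so that no $z$ has an infeasible binary core, I read $L$ as $-\infty$ and the second term drops out. If $L\le 0$, the constraint $\gamma>L/\delta$ is automatic for positive $\gamma$. In every case $\gamma^\star$ is a finite positive number.

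\textbf{Step 2: core binarity.} Conditions (C1) and (C2) are exactly the hypotheses of Theorem~\ref{thm:binarity_diagfree}: $V$ is diagonal-free on the core and has integer coefficients $Q_{ij},d_i$. With $\gamma>\max_{i\in\mathcal C}|w_i|$, that theorem shows every local minimizer of $\hat f$ over $[0,1]^N$ has a binary core. This is precisely hypothesis (i) of Theorem~\ref{thm:feasibility_repair}.

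\textbf{Step 3: feasibility.} Condition (C3) supplies hypothesis (ii) of Theorem~\ref{thm:feasibility_repair}, with the same $\delta$ and $\mathcal D$. Since $\gamma>L/\delta$, that theorem gives that every local minimizer has a feasible core, meaning the core lies in $\mathcal F$, the projected feasible set of the original problem.

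\textbf{Step 4: applying the result to box-local minimizers.} The statement concerns box-local minimizers, so I must check that each one is a local minimizer of $\hat f$ over $[0,1]^N$. By definition, a box-local minimizer is box-stationary and satisfies the second-order sufficient condition. Since $\hat f$ is quadratic, this makes it a (strict) local minimizer on the box, so Steps 2 and 3 apply to it.

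The argument is essentially bookkeeping. The only points that need care are the finiteness of $L$ in Step 1, which is where boundedness of $\mathcal D$ is used, and the sign convention that ensures $|\gamma|=\gamma$ in Step 2.
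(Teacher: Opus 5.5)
Your proposal is correct and matches the paper's route. The paper gives no separate proof and presents the theorem as the direct combination of Theorem~\ref{thm:binarity_diagfree} and Theorem~\ref{thm:feasibility_repair}, with threshold $\gamma^\star=\max\{\max_{i\in\mathcal C}|w_i|,\,L/\delta\}$. Your extra bookkeeping fills in details the paper leaves unstated: finiteness of $L$ from boundedness of $\mathcal D$, strict positivity of $\gamma^\star$, and the passage from box-local minimizers to local minimizers over $[0,1]^N$.
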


Thus, under the stated assumptions, invalid box-local minima are ruled out.
This criterion is purely local, requires no convexity assumptions, and
is independent of the specific combinatorial problem under
consideration.

\section{Examples}
We apply our theorem to several classical combinatorial optimization (CO) problems with linear objectives and hard constraints, illustrating how our penalty construction ensures binarity and feasibility of local minimizers. We always cast examples into the minimization template $\hat f(z) = w^\top z + \gamma V(z)$.

\subsection{Open-Pit Mining (OP)}
Open-pit mining is a classical precedence-constrained combinatorial
optimization problem arising in large-scale resource extraction \cite{hochbaum2000performance}.
The mining region is discretized into a finite collection of blocks, each associated with an economic value.
Due to slope stability and safety requirements, the extraction of a
given block is permitted only after certain predecessor blocks have
been removed. The objective is to select a subset of blocks that maximizes the total
economic value while respecting all precedence constraints.

Let $i \in \{1,\dots,n\}$ index the blocks.
Each block $i$ is associated with an economic value $w_i \in \mathbb{R}$,
and a binary decision variable $x_i \in \{0,1\}$ indicating whether the block is extracted.

For each block $i$, let $P(i) \subseteq \{1,\dots,n\}$ denote the set of
predecessor blocks that must be extracted before $i$ (e.g., due to slope
or stability constraints).
The precedence-constrained optimization problem is given by
\begin{align}\nonumber
\max_{x \in \{0,1\}^n}
&\quad \sum_{i=1}^n w_i x_i, \\\nonumber
\text{s.t.}\;&\quad x_i \le x_j, \qquad \forall i \in \{1,\dots,n\},\ \forall j \in P(i).
\end{align}

We begin with a naive QUBO formulation for open-pit mining that enforces
precedence constraints only between parent--child pairs.
This formulation is often used as a direct encoding of the constraints,
and we therefore refer to it as the \emph{naive parent formulation}.

\paragraph{Naive parent formulation.}
Let $p \to c$ denote a precedence constraint requiring that block $p$ be
extracted before block $c$.
The naive parent formulation uses the penalty
\[
V_{\mathrm{par}}(x)
:= \sum_{p \to c} x_c (1 - x_p),
\]
leading to the relaxed objective
\[
\hat f(x) = - w^\top x + \gamma V_{\mathrm{par}}(x).
\]

As we will show, this seemingly natural penalty can admit infeasible
strict local minima in the box relaxation, regardless of how large the
penalty parameter $\gamma$ is.
This demonstrates that feasibility does not automatically follow from
large penalty weights unless the penalty satisfies an explicit local
repairability condition.

\medskip

We construct an alternative QUBO penalty that is specifically designed to satisfy the sufficient conditions for local correctness.

\paragraph{Ancestor formulation (theory-guided).}
Instead of penalizing only immediate parent--child violations, we
require that whenever a block $c$ is extracted, all of its ancestors
must also be extracted.
Let $Anc(c)$ denote the set of all (strict) ancestors of node $c$ in the
precedence directed acyclic graph (DAG).
We define the penalty
\[
V_{\mathrm{anc}}(x)
:= \sum_{c \in \{1,2,\dots,n\}} \sum_{\alpha \in Anc(c)} x_c (1 - x_\alpha),
\]
and the corresponding relaxed objective
\begin{equation}
\label{eq:op_ancestor}
\hat f(x) = - w^\top x + \gamma V_{\mathrm{anc}}(x).
\end{equation}

We can show that the ancestor formulation allows each infeasible point to admit one-bit repairable
directions and is therefore compatible with the local repairability
criterion developed earlier.

For both QUBO formulations, the binarity condition in our framework is
satisfied, as the quadratic penalties are diagonal-free and all
coefficients of $V$ are integer-valued.

\begin{proposition}[Feasibility of the ancestor formulation under a finite penalty]
\label{prop:op_ancestor_feasibility}
Let $\hat f(x) = -w^\top x + \gamma V_{\mathrm{anc}}(x)$ over $[0,1]^n$.
Assume that core binarity holds.
Then $V_{\mathrm{anc}}$ is locally repairable with margin $\delta = 1$ relative
to the single-bit removal directions
\[
D := \{ -e_i : i = 1,\dots,n \}.
\]
Then any
$\gamma > \max_i w_i$
guarantees that every binary PGD local minimum of $\hat f$ has a feasible  core.
\end{proposition}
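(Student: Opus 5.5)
The plan is to verify local repairability of $V_{\mathrm{anc}}$ directly (Definition~\ref{def:repairability}) and then invoke Theorem~\ref{thm:feasibility_repair}, computing $L$ for $\mathcal D=D$. There are no auxiliary variables, so $z=x$. The core is all of $\{1,\dots,n\}$ (indices with $w_i=0$ can be treated the same way, since the argument below never uses $w_i\neq 0$).

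\textbf{Step 1: gradient computation.} Differentiating,
\[
\partial_i V_{\mathrm{anc}}(x)=\sum_{\alpha\in Anc(i)}(1-x_\alpha)\;-\;\sum_{c:\, i\in Anc(c)} x_c .
\]
At a binary $x$, the first sum counts the unextracted ancestors of $i$, and the second counts the extracted descendants of $i$. Both are nonnegative integers.

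\textbf{Step 2: choosing a repairing bit.} Let $x\in\{0,1\}^n\setminus\mathcal F$. Some pair $(c,\alpha)$ with $\alpha\in Anc(c)$ has $x_c=1$ and $x_\alpha=0$. Among all $i$ with $x_i=1$ and at least one unextracted ancestor, choose one that is maximal in the DAG order, i.e.\ no descendant of $i$ has the same property. This is possible because the DAG is finite and acyclic.

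Next I show that $i$ has no extracted descendant. Suppose a descendant $c$ of $i$ had $x_c=1$. By transitivity of the ancestor relation, every ancestor of $i$ is also an ancestor of $c$. In particular, the unextracted ancestor of $i$ is an unextracted ancestor of $c$, so $c$ has the same property as $i$. This contradicts the maximality of $i$.

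Hence $\partial_i V_{\mathrm{anc}}(x)\ge 1$. Since $x_i=1$, the direction $-e_i$ lies in $T(x)$, and
\[
\langle\nabla V_{\mathrm{anc}}(x),-e_i\rangle\le -1 .
\]
This gives margin $\delta=1$.

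The main subtlety is that repairability must hold at the bit chosen, and the ancestor formulation is essential here. Under the parent formulation, the descendant term could cancel the ancestor term. Transitive closure of the penalty is exactly what lets the maximality argument rule out that cancellation.

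\textbf{Step 3: threshold.} For $d=-e_i$ with the objective $-w^\top x$, the linear part contributes $\langle -w,-e_i\rangle=w_i$. Taking the supremum over admissible $i$ gives $L\le\max_i w_i$. If $\max_i w_i\le 0$, then any $\gamma>0$ suffices, and the claimed bound still covers the stated range.

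Theorem~\ref{thm:feasibility_repair}, with core binarity assumed and $\delta=1$, then yields feasibility for every $\gamma>\max_i w_i$. Directly: at an infeasible binary local minimizer $x$,
\[
\langle\nabla\hat f(x),-e_i\rangle\le w_i-\gamma<0 ,
\]
so $-e_i$ is a feasible descent direction, which contradicts local minimality.
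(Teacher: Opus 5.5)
Your proof is correct and follows the same route as the paper: pick a DAG-maximal extracted block with a missing ancestor, show $\partial_i V_{\mathrm{anc}}(x)\ge 1$ there, and invoke Theorem~\ref{thm:feasibility_repair} with $\delta=1$ and $L\le\max_i w_i$. Your transitivity argument that this block has no extracted descendant is cleaner than the paper's justification.

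One small nit: the step $w_i-\gamma\,\partial_i V_{\mathrm{anc}}(x)\le w_i-\gamma$ needs $\gamma\ge 0$, as does the proof of Theorem~\ref{thm:feasibility_repair}. So your aside that the whole range $\gamma>\max_i w_i$ is covered when $\max_i w_i<0$ holds only in the intended penalty regime $\gamma\ge 0$; for negative $\gamma$ the literal statement can fail.
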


By Thm~\ref{thm:unified}, any penalty parameter $\gamma$ satisfying both the core binarity condition of Thm~\ref{thm:binarity_diagfree} and the feasibility condition of Thm~\ref{thm:feasibility_repair}
guarantees that all PGD local minimizers are binary and feasible.
For the open-pit mining \eqref{eq:op_ancestor}, this reduces to the simple sufficient condition
\[
\gamma > \max_i |w_i|.
\]
For the ancestor formulation, turning off a violating node always reduces the penalty. Flipping $x_c:\ 1\rightarrow 0$ eliminates all terms $x_c(1-x_\alpha)$ for every missing ancestor $\alpha$, and the penalty drops by at least $1$. We now contrast this positive result with a failure mode of the naive
parent formulation.

\begin{remark}[Why the naive parent formulation can fail]
\label{rem:op_parent_not_vertex_repairable}

Consider the chain $g \to r \to a \to b$ and the infeasible binary point
\[
x_g = 0,\quad x_r = 0,\quad x_a = 1,\quad x_b = 1.
\]
At this point, $V_{\mathrm{par}}(x) =  x_r (1 - x_g) + x_a (1 - x_r) + x_b (1 - x_a) = 1$, yet a direct calculation shows that
\[
\nabla V_{\mathrm{par}}(x) = 0.
\]
Hence, for every feasible direction $d \in T(x)$ we have
$\langle \nabla V_{\mathrm{par}}(x), d \rangle = 0$, violating the local repairability condition for any $\delta > 0$.
This flatness reflects the need for coordinated multi-bit repairs in the
parent formulation, in contrast to the strict one-bit repairs admitted
by the ancestor formulation.
\end{remark}

\begin{lemma}[The flat infeasible vertex can be a strict local minimizer]
\label{lem:op_parent_strict_local_min}
Fix any $\gamma > 0$.
Consider the chain precedence graph $g \to r \to a \to b$ with the naive
parent penalty
\[
V_{\mathrm{par}}(x)
= x_r (1 - x_g) + x_a (1 - x_r) + x_b (1 - x_a).
\]
Define the relaxed objective over $[0,1]^4$ by
\[
\hat f(x) := - w^\top x + \gamma V_{\mathrm{par}}(x),
\quad x = (x_g, x_r, x_a, x_b),
\]
and choose weights
\[
w_g = w_r = -1, \qquad w_a = w_b = 1.
\]
Then the infeasible binary point
\[
x^\star = (0,0,1,1)
\]
is a \emph{strict local minimizer} of $\hat f$ over $[0,1]^4$.
\end{lemma}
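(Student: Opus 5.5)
The plan is to verify strict local minimality directly, by parametrizing a neighborhood of $x^\star$ inside the box and expanding $\hat f$ exactly. Since $\hat f$ is quadratic, the expansion is a polynomial identity with no remainder term. It is not enough to appeal to box-stationarity plus a second-order condition abstractly. By Remark~\ref{rem:op_parent_not_vertex_repairable} the penalty gradient vanishes at $x^\star$, so the strictness has to come from the linear objective term pushing inward against every active bound. We will show that this linear term dominates the quadratic cross terms near $x^\star$.

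\textbf{Step 1 (coordinates).} Every point of $[0,1]^4$ near $x^\star=(0,0,1,1)$ can be written as
\[
x=(s,\;t,\;1-u,\;1-v), \qquad s,t,u,v\ge 0 \text{ small}.
\]
With $w_g=w_r=-1$ and $w_a=w_b=1$, the objective part is
\[
-w^\top x = s+t-(1-u)-(1-v) = -2+s+t+u+v .
\]

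\textbf{Step 2 (penalty expansion).} Substituting into $V_{\mathrm{par}}$ gives
\[
t(1-s)+(1-u)(1-t)+(1-v)u = 1 - ts + ut - uv .
\]
Hence $\hat f(x^\star)=\gamma-2$, and
\[
\hat f(x)-\hat f(x^\star) = (s+t+u+v) + \gamma\,(ut - ts - uv).
\]
This shows that the first-order part is strictly positive along every nonzero tangent direction, which is the key structural fact. All four coordinates are active at bounds, and the objective gradient points into the box in each of them.

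\textbf{Step 3 (domination of the quadratic terms).} Restrict to $\max(s,t,u,v)\le r$. Drop the nonnegative term $\gamma ut$ and bound the negative terms by $ts\le r\,t$ and $uv\le r\,u$. This yields
\[
\hat f(x)-\hat f(x^\star) \;\ge\; (s+t+u+v)-\gamma r (t+u) \;\ge\; (1-\gamma r)(s+t+u+v).
\]
Choosing $r<1/\gamma$ makes the right-hand side strictly positive whenever $x\neq x^\star$. Therefore $x^\star$ is a strict local minimizer on the neighborhood $\{x\in[0,1]^4:\|x-x^\star\|_\infty\le r\}$.

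The only delicate point is Step 3: the negative cross terms $-\gamma ts$ and $-\gamma uv$ could in principle compete with the linear growth. Bounding them by $r$ times a linear term handles this. The resulting neighborhood radius $r<1/\gamma$ shrinks as $\gamma$ grows, but it stays positive for every fixed $\gamma>0$, which is exactly what the lemma asserts.
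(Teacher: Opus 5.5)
Your proof is correct and follows essentially the same route as the paper. Both expand the quadratic $\hat f$ exactly at $x^\star$, observe that the linear term contributes $\|h\|_1 = s+t+u+v$ on the tangent cone while the penalty contributes only second-order terms, and conclude on a neighborhood of radius below $1/\gamma$. The only difference is how the quadratic remainder is bounded: you work coordinatewise on an $\ell_\infty$ ball, dropping $\gamma ut\ge 0$ and using $ts\le rt$, $uv\le ru$, whereas the paper uses the spectral bound $\|H\|_2\le 2$ to get $\hat f(x^\star+h)-\hat f(x^\star)\ge \|h\|_2-\gamma\|h\|_2^2$.
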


\subsection{Set Packing and 0--1 Knapsack}

The family of packing-type constraints naturally splits into two regimes.
When each constraint has right-hand side $1$ and $0$--$1$ incidence coefficients,
pairwise conflict penalties are the natural square-free construction.
By contrast, for genuine knapsack constraints one typically introduces slack variables
and starts from a squared residual penalty.
We therefore treat these cases separately.

\subsubsection{Set packing}

We first consider the \emph{set packing} class
\begin{equation}\label{eq:setpacking_ip}
\begin{aligned}
\max\ &\sum_{i=1}^n w_i x_i \\
\text{s.t.}\;& 
\sum_{i=1}^n A_{k,i}x_i \le 1,\qquad k=1,\dots,K, \\
&x\in\{0,1\}^n, 
\end{aligned}
\end{equation}
where $w\in\mathbb{R}_+^n$ and $A\in\{0,1\}^{K\times n}$.
Since each row has right-hand side $1$, a violation means that at least two
active variables appear in the same constraint.
This makes pairwise conflict penalties a natural diagonal-free encoding.

\paragraph{Examples.}
\begin{itemize}
\setlength{\itemsep}{-3.5pt}
\item \textbf{Graph MIS (edge packing).}
For a graph $G=(V,E)$, the constraints $x_u+x_v\le 1$ for each edge $(u,v)\in E$
can be written as $Ax\le \mathbf{1}$ with one row per edge and two ones per row
\cite{karp2009reducibility}.
This is the familiar MIS feasibility condition.

\item \textbf{$k$-set packing.}
Let $U$ be a universe of elements and let
$\mathcal{S}=\{S_1,\dots,S_n\}$ be sets with $|S_i|\le k$.
Choosing disjoint sets is equivalent to requiring that for each element $e\in U$,
at most one chosen set may contain $e$, 
\[
\sum_{i:\ e\in S_i} x_i \le 1.
\]
For $k\ge 3$, this problem is NP-hard \cite{chan2012linear}.
\end{itemize}

To construct a QUBO, we use the pairwise conflict penalty
\begin{equation}
\label{eq:V_sp}
V_{\mathrm{sp}}(x)
:=
\sum_{k=1}^K \ \sum_{1\le i<j\le n} A_{k,i}A_{k,j}\,x_i x_j.
\end{equation}
The corresponding relaxed objective is
\begin{equation}
\label{eq:setpacking_relaxed}
\hat f_{\mathrm{sp}}(x):=
-w^\top x + \gamma V_{\mathrm{sp}}(x).
\end{equation}
The penalty $V_{\mathrm{sp}}(x)$ is diagonal-free with integer coefficients,
so Theorem~\ref{thm:binarity_diagfree} implies core binarity whenever
$\gamma>\max_i w_i$.
Moreover, for binary $x$, the condition $V_{\mathrm{sp}}(x)=0$ is equivalent to
feasibility of \eqref{eq:setpacking_ip}.

\begin{proposition}[Feasibility of local minimizers for set packing]
\label{prop:setpacking_feasibility}
Assume that any local minimizer of \eqref{eq:setpacking_relaxed} is binary on the core.
If $\gamma>\max_i w_i$, then every PGD local minimizer of
\eqref{eq:setpacking_relaxed} is feasible for \eqref{eq:setpacking_ip}.
\end{proposition}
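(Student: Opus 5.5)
The plan is to apply Theorem~\ref{thm:feasibility_repair} with core binarity taken as hypothesis (i). The remaining work is to verify local repairability with margin $\delta=1$ relative to the single-bit removal directions $\mathcal D=\{-e_i: i=1,\dots,n\}$, and then to compute the constant $L$. Here there are no auxiliary variables, so $z=x$. Because the objective is written as $-w^\top x+\gamma V_{\mathrm{sp}}(x)$, the linear vector in the template is $-w$.

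\textbf{Repairability.} Let $x\in\{0,1\}^n$ be infeasible. Then some row $k$ has $\sum_i A_{k,i}x_i\ge 2$, so there are two indices $i\neq j$ with $A_{k,i}=A_{k,j}=1$ and $x_i=x_j=1$. Since $x_i=1$, the direction $-e_i$ lies in $T(x)$. Differentiating \eqref{eq:V_sp} gives
\[
\partial_i V_{\mathrm{sp}}(x)=\sum_{k'}\sum_{l\neq i}A_{k',i}A_{k',l}x_l,
\]
a sum of nonnegative integers. This sum contains the term $A_{k,i}A_{k,j}x_j=1$, so $\partial_i V_{\mathrm{sp}}(x)\ge 1$. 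Therefore $\langle\nabla V_{\mathrm{sp}}(x),-e_i\rangle\le -1$, which gives margin $\delta=1$. The family $\mathcal D$ is finite, hence bounded.

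\textbf{The constant $L$.} For $d=-e_i\in\mathcal D\cap T(x)$, the inner product with the linear term is $\langle -w,-e_i\rangle=w_i$. Hence
\[
L\le \max_i w_i.
\]
Theorem~\ref{thm:feasibility_repair} then says that every local minimizer has a feasible binary core whenever $\gamma>\max_i w_i\ge L/\delta$. Since the constraints involve only $x$ and $x$ is the whole vector, a feasible core means the minimizer is feasible for \eqref{eq:setpacking_ip}.

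\textbf{Main obstacle.} The one delicate point is the edge case $\max_i w_i\le 0$, or more generally $L\le 0$. There the threshold is vacuous, but I still need $\gamma>0$. To handle this directly, I would argue from first order at an infeasible binary point: moving along $-e_i$ changes the objective by $w_i-\gamma\,\partial_iV_{\mathrm{sp}}\le w_i-\gamma<0$. This is a strict descent direction, so such a point cannot be a local minimizer. Since $w\in\mathbb R_+^n$, we have $\max_i w_i\ge 0$, so $\gamma>\max_i w_i$ already forces $\gamma>0$ and the argument is consistent.
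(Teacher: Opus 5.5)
Your proposal is correct and follows the paper's own route. An infeasible binary point has two active indices in a violated row, so the single-bit removal $-e_i$ gives $\partial_i V_{\mathrm{sp}}(x)\ge 1$ (margin $\delta=1$); then $L\le\max_i w_i$, and Theorem~\ref{thm:feasibility_repair} concludes. Your explicit check that $w\in\mathbb{R}_+^n$ forces $\gamma>0$ is a detail the paper leaves implicit. One caveat: equating the core with all of $x$ tacitly assumes $w_i>0$ for every $i$, an assumption the paper also makes without stating it.
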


The proof follows the local-repairability template:
if a binary point violates some row $k$, then at least two active variables appear in that row,
and turning off any one of them strictly decreases the penalty \eqref{eq:V_sp}.

\subsubsection{Single-constraint 0--1 knapsack and binary-equivalent cancellation}

We now turn to a single knapsack-type constraint
\begin{equation}
\label{eq:single_knapsack_ip}
\begin{aligned}
\max\ &\sum_{i=1}^n w_i x_i \\
\text{s.t.}\;& \sum_{i=1}^n a_i x_i \le b,\qquad x\in\{0,1\}^n,
\end{aligned}
\end{equation}
and first focus on the binary-coefficient case $a_i\in\{0,1\}$.

Unlike set packing, here we encode the inequality by introducing slack bits. Let
\[
s(y) := \sum_{\ell=0}^{M} 2^\ell\, y_\ell,
\qquad y\in[0,1]^{M+1},
\]
and define the residual
\[
r(x,y) := a^\top x + s(y) - b.
\]
With sufficiently many bits, any integer slack in $[0,2^{M+1}-1]$ is representable.
A natural QUBO representation of \eqref{eq:single_knapsack_ip} is
\begin{equation}
\label{eq:single_knapsack_raw}
\hat f(x,y):= -w^\top x + \gamma\, r(x,y)^2.
\end{equation}
However, the squared residual contains diagonal terms in the core variables,
and therefore may admit nonbinary local minimizers in the box relaxation.

To remove these diagonal terms while preserving the objective on binary core points, we consider the following formulation
\begin{equation}
\label{eq:single_knapsack_be}
V_{\mathrm{be}}(x,y)
:=
r(x,y)^2
-\sum_{i=1}^n a_i x_i^2
+\sum_{i=1}^n a_i x_i.
\end{equation}
We then consider the relaxed optimization problem
\begin{equation}
\label{eq:single_knapsack_relaxed}
\min_{(x,y)\in[0,1]^n\times[0,1]^{M+1}}
\hat f(x,y):=
-w^\top x+\gamma V_{\mathrm{be}}(x,y).
\end{equation}

We say that \eqref{eq:single_knapsack_be} is \emph{binary-equivalent}
to the raw squared penalty if they agree on all binary core assignments.

\begin{definition}
A quadratic polynomial $\tilde V(z)$ is said to be \emph{binary-equivalent} to $V(z)$
if $\tilde V(z)=V(z)$ for all core $x$ encoded in
$z=(x,y)\in\{0,1\}^{n}\times[0,1]^{M+1}$.
\end{definition}

In other words, binary-equivalent transformations preserve the value of the objective
on the discrete feasible/infeasible core assignments of interest,
while allowing us to reshape the relaxed landscape.
For \eqref{eq:single_knapsack_be}, the core part is diagonal-free with integer coefficients,
so Theorem~\ref{thm:binarity_diagfree} again yields core binarity whenever
$\gamma>\max_i w_i$.

\begin{theorem}[Feasibility of local minimizers for single-constraint binary-coefficient knapsack]
\label{thm:pack_feasibility}
Assume $a_i\in\{0,1\}$ for all $i$, and assume that any local minimizer of
\eqref{eq:single_knapsack_relaxed} is binary on the core.
If $\gamma>\max_i w_i$, then every PGD local minimizer of
\eqref{eq:single_knapsack_relaxed} is feasible for \eqref{eq:single_knapsack_ip}.
\end{theorem}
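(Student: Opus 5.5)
The plan is to verify the local-repairability condition of Definition~\ref{def:repairability} for $V_{\mathrm{be}}$ with margin $\delta=1$, relative to the single-bit removal directions $\mathcal D=\{-e_i: i=1,\dots,n\}$ on the core. We then invoke Theorem~\ref{thm:feasibility_repair}. Core binarity, hypothesis (i) of that theorem, is assumed in the statement; it is also supplied by Theorem~\ref{thm:binarity_diagfree}, since the core part of \eqref{eq:single_knapsack_be} is diagonal-free with integer coefficients. So only hypothesis (ii) and the computation of $L$ remain.

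\textbf{Step 1: gradient in a core coordinate.} Expanding \eqref{eq:single_knapsack_be}, for each core index $i$ we have
\[
\partial_{x_i} V_{\mathrm{be}}(x,y) = 2a_i\, r(x,y) - 2a_i x_i + a_i = a_i\bigl(2r(x,y) - 2x_i + 1\bigr).
\]
Here the cancellation term removes the $a_i^2x_i^2=a_ix_i^2$ contribution, using $a_i\in\{0,1\}$. At a point with $a_i=1$ and $x_i=1$ this derivative equals $2r(x,y)-1$.

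\textbf{Step 2: lower bound on the residual at infeasible binary cores.} Let $z=(x,y)$ with $x\in\{0,1\}^n$ infeasible, i.e.\ $a^\top x>b$. We take $b$ to be a nonnegative integer, as is implicit in the slack encoding. Since $a^\top x$ is then an integer, $a^\top x\ge b+1$. Because $y\in[0,1]^{M+1}$ gives $s(y)\ge 0$ even for fractional slack bits, we get
\[
r(x,y)\ge a^\top x-b\ge 1.
\]
Moreover, $a^\top x\ge 1$ forces some index $i$ with $a_i=1$ and $x_i=1$. For that $i$, the direction $d=-e_i$ lies in $T(z)$ because $x_i=1$, and
\[
\langle\nabla V_{\mathrm{be}}(z),d\rangle = -(2r-1)\le -1.
\]
This establishes local repairability with $\delta=1$. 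This step is the crux: the uniform margin must hold for every auxiliary $y$ in the box, not only binary slack, and it is the sign of $s(y)$ together with integrality of $a^\top x - b$ that delivers it.

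\textbf{Step 3: the constant $L$.} The objective is $-w^\top x$, so for $d=-e_i$ the linear contribution is $\langle -w,-e_i\rangle=w_i$. Hence $L\le\max_i w_i$.

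\textbf{Step 4: conclusion.} Theorem~\ref{thm:feasibility_repair} then yields feasibility of every local minimizer whenever $\gamma>L/\delta$, which holds under the hypothesis $\gamma>\max_i w_i$. Equivalently, we can argue directly: at an infeasible binary-core local minimizer, the directional derivative of $\hat f$ along $-e_i$ is at most $w_i-\gamma<0$. This is a feasible descent direction, contradicting box-stationarity.
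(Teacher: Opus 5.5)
Your proposal is correct and follows the paper's proof essentially step for step. You verify local repairability with margin $\delta=1$ along $(-e_i,0)$ for an active item with $a_i=1$, using $r\ge 1$ from integrality and $s(y)\ge 0$, then bound $L\le\max_i w_i$ and invoke Theorem~\ref{thm:feasibility_repair}; your explicit assumption $b\ge 0$, which guarantees that such an active item exists, is a small point the paper leaves implicit.
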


The key point is that in the single-constraint binary-coefficient case,
an infeasible binary point has positive residual $r\ge 1$,
and turning off any active variable in the violated constraint yields a strict penalty decrease.
Thus the binary-equivalent cancellation is sufficient in this case.

For general multi-constraint knapsack problems with integer coefficients, 
binary-equivalent diagonal cancellation is no longer sufficient to guarantee 
local repairability. We discuss this obstruction and a repair-oriented 
over-corrected construction in Appendix~\ref{app:mkp}. Since the construction 
does not preserve the binary objective values in the same way as the 
single-constraint formulation, we treat it as an extension rather than as part 
of the main theorem pipeline.
\subsection{Traveling Salesman Problem}

This section illustrates how our framework applies to the traveling
salesman problem (TSP). We first consider the assignment problem
\cite{lawler2001combinatorial}, allowing both multiple cycles and
self-loops (standing traveling salesman). We then show that the time-indexed TSP fits an extension of our framework to a
quadratic core objective, yielding a relaxed QUBO whose local
minimizers are binary and feasible TSP solutions.

Throughout, let $n\ge 3$ and let $c_{ij}\ge 0$ be travel costs.

\subsubsection{Assignment problem}
\label{subsec:tsp_cyclecover}

We begin with the assignment problem
\begin{align}
\label{eq:cyclecover_ip}
\min_{x}\ &\sum_{i=1}^n\sum_{j=1}^n c_{ij} x_{ij}\\
\text{s.t.}\;&
\sum_{j=1}^n x_{ij} = 1\ \ \forall i,
\quad
\sum_{i=1}^n x_{ij} = 1\ \ \forall j,
\quad
x_{ij}\in\{0,1\}.\notag
\end{align}
Feasible solutions are permutation matrices. Equivalently, they encode a
cycle cover of the cities, where 1-cycles $x_{ii}=1$ are allowed with $c_{ii}=0$. We
therefore interpret \eqref{eq:cyclecover_ip} as an assignment model
with standing moves.

\paragraph{Square-free penalty for degree equalities.}
Define residuals
\[
\rho_i(x) := \sum_{j=1}^n x_{ij} - 1,
\qquad
\kappa_j(x) := \sum_{i=1}^n x_{ij} - 1.
\]
The raw squared penalty $\sum_i \rho_i(x)^2 + \sum_j \kappa_j(x)^2$
contains diagonal core terms $x_{ij}^2$. Since each $x_{ij}$ appears in
exactly one row residual and one column residual, its diagonal
coefficient in the raw squared penalty is $2$. We therefore apply a
binary-equivalent diagonal cancellation and define
\begin{equation}
\label{eq:V_deg}
\begin{aligned}
V_{\mathrm{deg}}(x) := {} &
\sum_{i=1}^n \rho_i^2(x)
+ \sum_{j=1}^n \kappa_j^2(x)\\
&
- 2\sum_{i=1}^n\sum_{j=1}^n x_{ij}^2
+ 2\sum_{i=1}^n\sum_{j=1}^n x_{ij}.
\end{aligned}
\end{equation}

We consider the box-relaxed penalized objective
\begin{equation}
\label{eq:cyclecover_relaxed}
 \hat f_{\mathrm{deg}}(x)
:=
\sum_{i=1}^n\sum_{j=1}^n c_{ij}x_{ij}
+
\varepsilon\sum_{i=1}^n\sum_{j=1}^n x_{ij} +
\gamma\,V_{\mathrm{deg}}(x).
\end{equation}
where $\varepsilon\sum_{i=1}^n\sum_{j=1}^n x_{ij}$ is added to make every variable appear in the core with a positive coefficient and equals $n\varepsilon$ on every feasible assignments.

\paragraph{Binarity.}
Since $V_{\mathrm{deg}}$ is diagonal-free with integer-valued
coefficients on the core variables, Theorem~\ref{thm:binarity_diagfree}
implies that every PGD local minimum of \eqref{eq:cyclecover_relaxed} is
binary on the core provided
\[
\gamma > c_{\max},
\qquad
c_{\max}:=\max_{1\le i,j\le n} c_{ij}+\varepsilon.
\]

\begin{theorem}[Feasibility of local minimizers for the assignment relaxation]
\label{thm:cyclecover_feas}
Assume $c_{ii}=0 \ \forall i$ and $c_{ij}\geq0$ for all $i\neq j$ and let $c_{\max}:=\max_{i,j} c_{ij}+\varepsilon$.
Consider \eqref{eq:cyclecover_relaxed} with $V_{\mathrm{deg}}$ defined in
\eqref{eq:V_deg}. Assume $\gamma > c_{\max}$ so that every PGD local
minimum has a binary core. Then every PGD local minimum $x^\star$
satisfies the assignment equalities
\[
\sum_{j=1}^n x^\star_{ij}=1\ \ \forall i,
\qquad
\sum_{i=1}^n x^\star_{ij}=1\ \ \forall j,
\]
and hence is feasible for \eqref{eq:cyclecover_ip}.
\end{theorem}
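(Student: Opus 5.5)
The plan is a direct first-order argument at binary points. It does not invoke the abstract margin of Theorem~\ref{thm:feasibility_repair} directly, but it follows the same template. By hypothesis every PGD local minimum $x^\star$ has a binary core. Suppose $x^\star$ violates some assignment equality. I will exhibit a coordinate direction $d=\pm e_{ij}\in T(x^\star)$ with $\langle\nabla \hat f_{\mathrm{deg}}(x^\star),d\rangle<0$. Such a direction contradicts box-stationarity, and hence local minimality.

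\textbf{Step 1: gradient at binary points.} Differentiating \eqref{eq:V_deg} gives
\[
\partial_{ij}V_{\mathrm{deg}}(x)=2\rho_i(x)+2\kappa_j(x)-4x_{ij}+2 .
\]
At a binary $x$ this equals $2\rho_i+2\kappa_j-2$ when $x_{ij}=1$, and $2\rho_i+2\kappa_j+2$ when $x_{ij}=0$. The linear part of $\hat f_{\mathrm{deg}}$ contributes $c_{ij}+\varepsilon$, which lies in $(0,c_{\max}]$ because $c_{ij}\ge 0$ and $\varepsilon>0$.

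\textbf{Step 2: over-covered case.} Suppose some row has $\rho_i\ge 1$. Pick any $j$ with $x^\star_{ij}=1$. Column $j$ then contains a one, so $\kappa_j\ge 0$, and therefore $\partial_{ij}V_{\mathrm{deg}}(x^\star)=2(\rho_i-1)+2\kappa_j\ge 0$. The direction $d=-e_{ij}$ lies in $T(x^\star)$, and
\[
\langle\nabla\hat f_{\mathrm{deg}}(x^\star),-e_{ij}\rangle=-(c_{ij}+\varepsilon)-\gamma\,\partial_{ij}V_{\mathrm{deg}}(x^\star)\le-\varepsilon<0 .
\]
The case $\kappa_j\ge1$ for some column is symmetric, removing a one in that column. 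This is where the $\varepsilon$ term matters. In the borderline case $\rho_i=1,\kappa_j=0$ the penalty gradient vanishes, and strict descent comes only from the positive linear weight.

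\textbf{Step 3: under-covered case.} Otherwise all row and column sums are at most $1$ and some equality fails. Say some row has sum $0$, so $\rho_i=-1$. The total number of ones, counted by rows and by columns, is then less than $n$. Hence some column $j$ also has $\kappa_j=-1$, and in particular $x^\star_{ij}=0$. This yields $\partial_{ij}V_{\mathrm{deg}}(x^\star)=-2$. The direction $d=e_{ij}\in T(x^\star)$ gives
\[
\langle\nabla\hat f_{\mathrm{deg}}(x^\star),e_{ij}\rangle=c_{ij}+\varepsilon-2\gamma<0
\]
since $\gamma>c_{\max}$. The case where a column is empty is symmetric.

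Both cases contradict the box-stationarity that every local minimizer satisfies, so $x^\star$ satisfies all row and column equalities. The main obstacle is Step 3: one must find a \emph{single} coordinate fixing both a deficient row and a deficient column simultaneously. The counting argument, which equates total row sums with total column sums, supplies the matching deficient column.
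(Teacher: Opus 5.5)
Your proof is correct and follows essentially the same route as the paper's: the same gradient formula $2\rho_i+2\kappa_j-4x_{ij}+2$, removing a one from an over-covered row or column (where strictness comes from $c_{ij}+\varepsilon>0$), and a counting argument that pairs an empty row with an empty column so that $+e_{ij}$ gives $c_{ij}+\varepsilon-2\gamma<0$. The differences are cosmetic. You use a direct sign bound where the paper adds an unnecessary integrality step in the surplus case, and you treat all surpluses before all deficits, whereas the paper finishes rows first and then invokes row/column symmetry.
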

\subsubsection{Time-indexed TSP and diagonal-free quadratic cores}
\label{subsec:tsp_time}

We now turn to a formulation that \emph{does} fit our local-correctness
framework. Let $x_{i,t}\in\{0,1\}$ indicate that city $i$ is visited at
time $t$, with time interpreted cyclically. The time-indexed TSP is
\begin{equation}
\label{eq:time_tsp_ip}
\begin{aligned}
\min_x\quad &
Q(x):=\sum_{t=1}^n\sum_{i=1}^n\sum_{j=1}^n
c_{ij}\,x_{i,t}\,x_{j,t+1} \\
\text{s.t.}\quad
&\sum_{i=1}^n x_{i,t}=1,\qquad t=1,\dots,n, \notag\\
&\sum_{t=1}^n x_{i,t}=1,\qquad i=1,\dots,n, \notag\\
&x_{i,t}\in\{0,1\}. \notag
\end{aligned}
\end{equation}
Any feasible point is a permutation matrix in the city--time grid, and
therefore encodes a Hamiltonian tour.

Define the assignment residuals
\[
\rho_t(x):=\sum_{i=1}^n x_{i,t}-1,
\qquad
\kappa_i(x):=\sum_{t=1}^n x_{i,t}-1.
\]
As in the assignment model, the raw squared penalty
$\sum_t \rho_t(x)^2 + \sum_i \kappa_i(x)^2$ contains diagonal terms
$x_{i,t}^2$. Since each variable appears in exactly one time residual
and one city residual, we again apply a binary-equivalent diagonal
cancellation and define
\begin{equation}\nonumber
\label{eq:V_time}
\begin{aligned}
V_{\mathrm{time}}(x):={}&
\sum_{t=1}^n \rho_t(x)^2
+\sum_{i=1}^n \kappa_i(x)^2\\
&
-2\sum_{i=1}^n\sum_{t=1}^n x_{i,t}^2
+2\sum_{i=1}^n\sum_{t=1}^n x_{i,t}.
\end{aligned}
\end{equation}
As previous, we make every variable appear in the core with a positive coefficient by adding a tiny tie-break term $\varepsilon\sum_{i,t}x_{i,t}$, which is
constant over feasible tours. The relaxed objective is
\begin{equation}
\label{eq:time_relaxed}
\hat f_{\mathrm{time}}(x)
:=
Q(x)
+\varepsilon\sum_{i=1}^n\sum_{t=1}^n x_{i,t}
+\gamma V_{\mathrm{time}}(x),
\ \  x\in[0,1]^{n\times n}.
\end{equation}

This formulation no longer has a linear core, but it fits the
diagonal-free quadratic-core extension of our theory: the travel cost
$Q(x)$ is bilinear, while the penalty $V_{\mathrm{time}}$ is square-free
on the core. The same large-$\gamma$ mechanism then yields binarity and
feasibility of local minimizers.

\begin{theorem}[Local correctness of the time-indexed TSP relaxation]
\label{thm:time_tsp_correct}
Fix $\varepsilon>0$ and consider \eqref{eq:time_relaxed}, then there
exists a finite threshold $\gamma^\star_{\mathrm{time}}>0$, such that
for all $\gamma>\gamma^\star_{\mathrm{time}}$, every PGD local
minimizer of \eqref{eq:time_relaxed} is binary and satisfies the
assignment equalities
\[
\sum_{i=1}^n x^\star_{i,t}=1\ \ \forall t,
\qquad
\sum_{t=1}^n x^\star_{i,t}=1\ \ \forall i.
\]
\end{theorem}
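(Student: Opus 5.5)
The plan is to avoid invoking Theorem~\ref{thm:unified} as a black box, because $Q$ is quadratic rather than linear. Instead I will redo its two steps (binarity, then repairability) directly for $\hat f_{\mathrm{time}}$, using the fact that the whole objective is square-free. The first step is to expand $V_{\mathrm{time}}$. Since $\rho_t^2=\sum_i x_{i,t}^2+2\sum_{i<j}x_{i,t}x_{j,t}-2\sum_i x_{i,t}+1$, and similarly for $\kappa_i^2$, the diagonal terms cancel and
\[
V_{\mathrm{time}}(x)=2\sum_t\sum_{i<j}x_{i,t}x_{j,t}+2\sum_i\sum_{t<s}x_{i,t}x_{i,s}-2\sum_{i,t}x_{i,t}+2n .
\]
This gives
\[
\partial_{x_{i,t}}\hat f_{\mathrm{time}}=q_{i,t}(x)+\varepsilon+2\gamma\,(k_{i,t}(x)-1).
\]
Here $k_{i,t}:=\sum_{j\ne i}x_{j,t}+\sum_{s\ne t}x_{i,s}$, and $q_{i,t}:=\sum_j c_{ij}x_{j,t+1}+\sum_j c_{ji}x_{j,t-1}$ satisfies $0\le q_{i,t}\le B:=2n\max_{i,j}c_{ij}$. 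Since $n\ge 3$, the variables $x_{i,t}$ and $x_{j,t\pm1}$ are distinct, so $Q$ is also square-free. Hence $\hat f_{\mathrm{time}}$ is multilinear in all variables.

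\textbf{Binarity.} I will mimic Lemma~\ref{lem:interior_independence} for the full objective. If two interior coordinates of a local minimizer have a nonzero cross coefficient $a$, then restricting to those two coordinates gives a $2\times2$ Hessian $\begin{pmatrix}0&a\\ a&0\end{pmatrix}$, which is indefinite. This contradicts local minimality (the second-order condition along $e_p-\operatorname{sign}(a)e_q$ fails, while the first-order terms vanish because both coordinates are interior). Every pair in the same time column or city row has cross coefficient $2\gamma$ or $2\gamma+c_{ii}+\dots>0$. So whenever $x_{i,t}$ is interior, all its row and column partners are binary and $k_{i,t}\in\mathbb Z_{\ge 0}$. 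Any $Q$-partner that is interior has zero coupling and does not enter $q_{i,t}$. An interior coordinate requires $\partial_{x_{i,t}}\hat f=0$. If $k_{i,t}=1$, the partial is at least $\varepsilon>0$; this is exactly why the tie-break term is needed. If $k_{i,t}\ne1$, the partial has absolute value at least $2\gamma-B-\varepsilon>0$ once $\gamma>(B+\varepsilon)/2$. So no interior coordinate survives.

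\textbf{Feasibility.} Let $x$ be binary and infeasible, and split into two cases.
\begin{itemize}[leftmargin=2.2em]
\item Some row or column contains two ones. Pick one of them, $x_{i,t}=1$. Then $k_{i,t}\ge1$ and the partial is at least $\varepsilon>0$, so the direction $-e_{i,t}\in T(x)$ is a strict first-order descent direction.
\item Every row and column has at most one one. Then $x$ is a partial permutation with fewer than $n$ ones. Hence some time $t$ and some city $i$ are both empty, so $x_{i,t}=0$ with $k_{i,t}=0$. The partial is at most $B+\varepsilon-2\gamma<0$, so $+e_{i,t}$ is a descent direction.
\end{itemize}
In either case $x$ is not box-stationary, so it is not a local minimizer. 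Taking $\gamma^\star_{\mathrm{time}}:=(B+\varepsilon)/2$ therefore suffices for both steps.

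The step I expect to be most delicate is the binarity argument. It needs the second-order indefiniteness argument to be airtight for multilinear objectives, including the case where the Hessian coupling comes partly from $Q$. It also needs the observation that $\varepsilon>0$ rules out the degenerate case $k_{i,t}=1$, where the penalty gradient vanishes.
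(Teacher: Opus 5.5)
Your proposal is correct and follows essentially the same route as the paper's appendix proof. Both arguments show interior independence within each time slice and each city via an indefinite $2\times 2$ Hessian block, use integrality of the penalty partial $2(k_{i,t}-1)$ at an interior coordinate (with $\varepsilon>0$ excluding $k_{i,t}=1$), and then split into surplus and deficit cases for feasibility; your explicit handling of the extra $c_{ii}$ coupling between adjacent-time entries of the same city, which the paper only calls ``similar'', and your single threshold $(B+\varepsilon)/2$ are sound minor refinements.
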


\begin{remark}[Removing the positive-cost assumption]
If some $c_{ij}=0$, replace $c_{ij}$ by $c_{ij}+\varepsilon'$ for any
$\varepsilon'>0$. Since every feasible tour uses exactly $n$ transitions,
this adds the constant $n\varepsilon'$ to all feasible tours, but
ensures all edge interactions appear in the quadratic core.
\end{remark}

\section{Numerical Experiments}
We present two experiments. First, we validate the local-correctness theory on
open-pit mining (MineLib~\cite{espinoza2013minelib}), 0--1 knapsack
(kplib~\cite{pisinger2005hard}), and traveling salesman
(TSPLIB~\cite{reinelt1991tsplib}) by comparing naive squared penalties with
our theory-guided constructions. Second, we give solver-context results on
maximum independent set (MIS), a canonical packing problem covered by our
theory. These experiments are intended as empirical context for the relaxed
QUBO formulation, not as a comprehensive benchmark against specialized
solvers.

\subsection{Validation of Local-Correctness Theory}
Two instances are tested for each problem class, where for each problem we compare a standard (naive) QUBO formulation
against our theory-guided construction. Each configuration is run from 10 random initializations,
and we report the empirical \emph{binarity rate} and \emph{feasibility rate} of the returned solutions.
\vspace{-0.1in}
\paragraph{Penalty regimes.}
For knapsack and TSP we test two penalty weights:
$
\gamma_{\text{small}}= \gamma^* +0.1
$
where $\gamma^*$ is the threshold according to Thm~\ref{thm:unified} and
$
\gamma_{\text{large}} = 10^3\,\gamma_{\text{small}}
$,
intended to approximate the ``$\gamma\to\infty$'' regime.
We report the resulting ranges of $\gamma_{\text{small}}$ across instances in the corresponding tables.

\vspace{-0.1in}
\paragraph{Optimizers.}
For slack-free QUBOs (open pit and time-indexed TSP), we use PGD,
aligned with our theoretical characterization of PGD local minima.
For knapsack, slack expansions can make the landscape highly ill-conditioned; we therefore use projected Adam,
which more reliably escapes flat first-order traps, and evaluate correctness using the same binarity/feasibility criteria. Since the preconditioning factors are strictly positive, the set of first-order optimality conditions remains unchanged.

\begin{table}[t]
    \centering
    \caption{Correctness verification of the relaxed QUBOs of two open pit instances. $\gamma^* \approx 1e16$.}\vspace{-0.05in}
    \label{tab:open_pit}
    \resizebox{\linewidth}{!}{
        \begin{tabular}{lcccc}
            \toprule
             & \multicolumn{2}{c}{Theory-Guided QUBO} & \multicolumn{2}{c}{Naive QUBO} \\
            \midrule
            \multirow{2}{*}{$\gamma^*+0.1$}
            &$\quad$ Bin: &100\% (20/20)&$\quad$ Bin: & ~~60\% (12/20)\\
            &$\quad$ Feas: &100\% (20/20)& $\quad$ Feas: & 0\% (0/20)\\
            \bottomrule
        \end{tabular}
    }\vspace{-0.2in}
\end{table}
\begin{table}[t]
    \centering
    \caption{Correctness verification of the relaxed QUBOs of two knapsack instances. $\gamma^* \approx 1000$.}\vspace{-0.05in}
    \label{tab:knapsack}
    \resizebox{\linewidth}{!}{
        \begin{tabular}{lcccc}
            \toprule
             & \multicolumn{2}{c}{Theory-Guided QUBO} & \multicolumn{2}{c}{Naive QUBO} \\
            \midrule
            \multirow{2}{*}{$\gamma_\mathrm{small}$}
            &$\quad$ Bin: &100\% (20/20)&$\quad$ Bin: & ~~0\% (0/20)\\
            &$\quad$ Feas: &100\% (20/20)& $\quad$ Feas: & 100\% (20/20)\\
            \hline
            \multirow{2}{*}{$\gamma_\mathrm{large}$}
            &$\quad$ Bin: &100\% (20/20)&$\quad$ Bin: & ~~0\% (0/20)\\
            &$\quad$ Feas: &100\% (20/20)& $\quad$ Feas: & 100\% (20/20)\\
            \bottomrule
        \end{tabular}
    }\vspace{-0.1in}
\end{table}
\begin{table}[t]
    \centering
    \caption{Correctness verification for relaxed time-indexed TSP QUBOs under two penalty regimes for two TSP instances. $\gamma^* \approx\{1e5, 3e6\}$ each for one of the two instances.}\vspace{-0.05in}
    \label{tab:TSP}
    \resizebox{\linewidth}{!}{
        \begin{tabular}{lcccc}
            \toprule
             & \multicolumn{2}{c}{Theory-Guided QUBO} & \multicolumn{2}{c}{Naive QUBO} \\
            \midrule
            \multirow{2}{*}{$\gamma_\mathrm{small}$}
            &$\quad$ Bin: &100\% (20/20)&$\quad$ Bin: & 0\% (0/20)\\
            &$\quad$ Feas: &100\% (20/20)& $\quad$ Feas: & 0\% (0/20)\\
            \hline
            \multirow{2}{*}{$\gamma_\mathrm{large}$}
            &$\quad$ Bin: &100\% (20/20)&$\quad$ Bin: & 5\% (1/20)\\
            &$\quad$ Feas: &100\% (20/20)& $\quad$ Feas: & 0\% (0/20)\\
            \bottomrule
        \end{tabular}
    }\vspace{-0.1in}
\end{table}

\vspace{-0.1in}
\paragraph{Open-pit mining.}
We contrast the theory-guided ancestor form QUBO with the commonly used parent form QUBO. These MineLib instances contain ``air'' blocks with extremely negative values that are typically removed during preprocessing. We intentionally avoid such preprocessing and run on the raw block models to stress-test robustness. 
Even at this very large theoretical threshold, the ancestor formulation succeeds on the raw block model while the parent formulation fails.
Because the raw penalty scale is already dominated by these large-magnitude dummy blocks, we focus on formulation comparison rather than a separate ``very large $\gamma$'' ablation.

\paragraph{Knapsack and TSP.}
Across all tested instances, our proposed QUBOs achieve binary and feasible solutions under both $\gamma_{\text{small}}$ and $\gamma_{\text{large}}$. In contrast, the naive QUBOs can converge to fractional and/or infeasible local minima, and increasing $\gamma$ alone does not reliably remove these failures.

\subsection{Maximum Independent Set: Practical Context on a Canonical NP-hard Packing Problem}
\label{sec:exp_mis_context}

We next study the maximum independent set (MIS) problem as a canonical NP-hard packing problem. This experiment complements the theory developed earlier for packing-type relaxations. The goal here is not to present a new specialized MIS heuristic, but to show that the relaxed-QUBO formulation studied in this paper supports practically competitive local optimization when solved by PGD.

Table~\ref{tab:mis_scaling} reports the average independent-set size over 8 Erd\H{o}s--R\'enyi instances per setting under a 5-minute budget. PGD returns larger independent sets than Gurobi across all tested sizes and densities. Table~\ref{tab:mis_heuristics} compares PGD with lightweight heuristics on a representative subset. The results show that PGD is competitive with stochastic local search (SLS) and GRASP, and that batched GPU restarts improve the short-budget regime. Here SLS denotes stochastic perturbation (ratio $25\%$) followed by greedy repair, while GRASP denotes a semi-greedy randomized construction followed by local search.

\begin{table}[t]
\centering
\caption{Average MIS size under a fixed $5$-minute budget on $8$ Erd\H{o}s--R\'enyi instances for each $(n,p)$ setting. PGD and Gurobi are run under the same wall-clock budget.}
\label{tab:mis_scaling}
\vspace{-0.08in}
\resizebox{0.6\width}{!}{
\begin{tabular}{c|cc|cc|cc}
\toprule
$(n,p)$ & PGD $(0.3)$ & Gurobi $(0.3)$ & PGD $(0.5)$ & Gurobi $(0.5)$ & PGD $(0.7)$ & Gurobi $(0.7)$ \\
\midrule
2000  & 27     & 21.875 & 14.25  & 13.125 & 10     & 8.375 \\
4000  & 28.5   & 20.5   & 16     & 11.625 & 10.5   & 7.375 \\
6000  & 30.25  & 21.75  & 17.375 & 12.25  & 10.5   & 8     \\
8000  & 30.875 & 23.25  & 17.375 & 13.25  & 10.75  & 7.75  \\
10000 & 30.625 & 22.875 & 17     & 13.5   & 10.625 & 8.25  \\
20000 & 32.5   & 24.5   & 17.875 & 14     & 11     & 8.625 \\
30000 & 33.625 & 26.25  & 18.75  & 14.25  & 11.25  & 9.125 \\
40000 & 34.125 & 26.75  & 18.875 & 15.125 & 11.625 & 9.625 \\
\bottomrule
\end{tabular}}
\end{table}\vspace{-0.2in}

\begin{table}[H]
\centering
\caption{Representative MIS comparisons under $5$-minute and $30$-second budgets. Rows are ordered from smaller- to larger-MIS regimes, rather than by graph size alone. ``PGD (1GPU Parallel)'' denotes batched parallel restarts optimized simultaneously on one GPU.}
\label{tab:mis_heuristics}
\vspace{-0.1in}
\resizebox{\columnwidth}{!}{
\begin{tabular}{c|cccc}
\toprule
\multicolumn{5}{c}{$5$ minutes} \\
\midrule
$(n,p)$ & PGD & Gurobi & SLS & GRASP \\
\midrule
$(8000,0.7)$  & 10.75  & 7.75   & 10.5   & 11.125 \\
$(4000,0.5)$  & 16     & 11.625 & 16.125 & 17 \\
$(2000,0.3)$  & 27     & 21.875 & 25.625 & 27.125 \\
$(4000,0.3)$  & 28.5   & 20.5   & 28.25  & 29 \\
$(8000,0.3)$  & 30.875 & 23.25  & 30.375 & 30.625 \\
$(3000,1/30)$ & 198.5  & 142.75 & 194.75 & 186.125 \\
\midrule
\multicolumn{5}{c}{$30$ seconds} \\
\midrule
$(n,p)$ & PGD (1GPU Parallel) & PGD & SLS & GRASP \\
\midrule
$(8000,0.7)$  & 11.125 & 10.125 & 10.375 & 11 \\
$(4000,0.5)$  & 16.375 & 15.75  & 15.75  & 16.25 \\
$(2000,0.3)$  & 26.75  & 25.75  & 25.625 & 26.75 \\
$(4000,0.3)$  & 29     & 28.125 & 28.125 & 28.125 \\
$(8000,0.3)$  & 31     & 29.625 & 29.875 & 30 \\
$(3000,1/30)$ & 200.75 & 191.5  & 191.75 & 183.5 \\
\bottomrule
\end{tabular}}
\vspace{-0.2in}
\end{table}
\section{Conclusion}
We studied when solving box-relaxed QUBOs with first-order methods can be \emph{certifiably correct} for the underlying binary constrained problem. Under diagonal-free penalty structure, we showed that sufficiently large penalties rule out fractional local minima on the core variables, yielding binarity guarantees for box-local minimizers. We then introduced a local-repairability condition for constraint penalties, providing explicit thresholds under which any box-local minimizer is not only binary but also feasible. A practical design principle is highlighted: formulation choice is part of the theory, and selecting a certifiable formulation can be more important than tuning the penalty.

\section*{Acknowledgment}
This work was supported by the Defense Advanced Research Projects Agency (DARPA) under Cooperative Agreement No. HR0011-25-2-0021.

\section*{Impact Statement}

This paper provides theoretical conditions and explicit penalty thresholds under which box-constrained local minimizers of relaxed QUBO objectives are guaranteed to be binary and feasible for the underlying discrete constraints. The main positive impact is improved reliability and interpretability of gradient-based heuristics for QUBO models, including clarifying how formulation choices affect correctness.

The results are purely theoretical and do not guarantee global optimality. Potential negative impact comes from overinterpreting the guarantees or using excessively large penalties that can harm optimization or numerical stability; practitioners should treat the theory as a correctness check under stated assumptions and incorporate domain-specific constraints and oversight in any real deployment.

\bibliography{example_paper}
\bibliographystyle{icml2026}

\newpage

\newpage
\appendix
\onecolumn


\section*{Appendix}
In the appendix, we collect full proofs and technical details that are omitted from the main text for readability.
The appendix is organized as follows:

\begin{enumerate}
  \item \textbf{Appendix~\ref{app:1}} provides the proof of Lemma~\ref{lem:interior_independence}, establishing the interior-structure property used throughout the binarity analysis.

  \item \textbf{Appendix~\ref{app:2}} provides the proof of Theorem~\ref{thm:binarity_diagfree}, which gives the core binarity guarantee for diagonal-free integer-coefficient penalties.

  \item \textbf{Appendix~\ref{app:3}} provides the proof of Theorem~\ref{thm:feasibility_repair}, proving feasibility of local minimizers under the local-repairability condition and a sufficiently large penalty.

  \item \textbf{Appendix~\ref{app:4}} provides the proof of Proposition~\ref{prop:op_ancestor_feasibility}, establishing feasibility (via repair directions) for the \emph{ancestor} formulation in open-pit mining.

  \item \textbf{Appendix~\ref{app:5}} provides the proof of Lemma~\ref{lem:op_parent_strict_local_min}, giving a concrete failure mode of the naive \emph{parent} formulation by constructing an infeasible strict local minimizer.

  \item \textbf{Appendix~\ref{app:6}} provides the proof of Theorem~\ref{thm:pack_feasibility} for the single-constraint knapsack setting, deriving an explicit penalty threshold that enforces feasibility.

  \item \textbf{Appendix~\ref{app:mkp}} discusses the extension from single-constraint knapsack to multi-constraint knapsack. 

  \item \textbf{Appendix~\ref{app:7}} provides the proof of Theorem~\ref{thm:cyclecover_feas} for the assignment problem.

  \item \textbf{Appendix~\ref{app:tsp_time}} presents the time-indexed TSP extension (quadratic/bilinear travel-cost core): the square-free penalty construction, the interior-structure lemma, and the resulting core-binarity and feasibility guarantees.
\end{enumerate}

\section{Proof of Lemma~\ref{lem:interior_independence}}
\label{app:1}
\begin{proof}
Fix a local minimizer $z^\star$ of $\hat f(z)=w^\top z+\gamma V(z)$ over the box $[0,1]^N$ and let
$J:=J(z^\star)=\{i:0<z_i^\star<1\}$.
Assume toward a contradiction that there exist distinct indices $i,j\in J$ with $Q_{ij}\neq 0$ and at least one of $i$ and $j$ is a core variable.
Since $(z_i^\star,z_j^\star)\in(0,1)^2$, there exists $\varepsilon>0$ such that for all $(s,t)\in\mathbb{R}^2$
with $\|(s,t)\|_2\le \varepsilon$,
\[
z^\star+s e_i+t e_j \in [0,1]^N .
\]
Define the two–variable restriction
\[
\phi(s,t):=\hat f(z^\star+s e_i+t e_j).
\]
Then $(0,0)$ is a (unconstrained) local minimizer of the $C^2$ function $\phi$, so the $2\times 2$ Hessian
$H:=\nabla^2\phi(0,0)$ must be positive semidefinite.

Because $w^\top z$ is linear, it contributes no second derivatives.
For the penalty, the only term in $V$ that creates a mixed second derivative between coordinates $i$ and $j$
is $Q_{ij}z_i z_j$. Hence, the $(i,j)$ block of $\nabla^2 V(z^\star)$ has the form
\[
\nabla^2_{(i,j)} V(z^\star)=
\begin{bmatrix}
2 Q_{ii} & Q_{ij}\\
Q_{ij} & 2 Q_{jj}
\end{bmatrix},
\]
where we interpret $Q_{kk}=0$ for those indices on which $V$ has no diagonal term (in particular, for core
indices under the diagonal-free assumption). Therefore
\[
H=\gamma
\begin{bmatrix}
2 Q_{ii} & Q_{ij}\\
Q_{ij} & 2 Q_{jj}
\end{bmatrix}.
\]
Its determinant is
\[
\det(H)=\gamma^2\bigl(4Q_{ii}Q_{jj}-Q_{ij}^2\bigr)\le -\gamma^2 Q_{ij}^2<0,
\]
so $H$ is indefinite, contradicting the second-order necessary condition at $(0,0)$.
We conclude that no interior core coordinate can interact with another interior coordinate. 
\end{proof}

\section{Proof of Thm~\ref{thm:binarity_diagfree}}
\label{app:2}
\begin{proof}
Let $z^\star$ be a PGD local minimizer of $\hat f(z)=w^\top z+\gamma V(z)$ over $[0,1]^N$ and recall the
core set $\mathcal{C}=\{i:w_i\neq 0\}$.
Fix any core index $i\in \mathcal{C}$. Suppose toward a contradiction that $0<z_i^\star<1$.

By Lemma~\ref{lem:interior_independence}, the interior set $J(z^\star)$ is a core independent set of the interaction
graph induced by the off–diagonal coefficients $Q_{ij}$.
In particular, every index $j$ with $Q_{ij}\neq 0$ must satisfy $z_j^\star\in\{0,1\}$.

Compute the partial derivative of $V$ at $z^\star$ in the $i$th coordinate:
\[
\partial_i V(z^\star)=\sum_{j\neq i} Q_{ij} z_j^\star + d_i .
\]
Because $Q_{ij},d_i\in\mathbb{Z}$ and each $z_j^\star\in\{0,1\}$ whenever $Q_{ij}\neq 0$, we have
$\partial_i V(z^\star)\in\mathbb{Z}$.

Since $z_i^\star$ is interior, the first-order stationarity condition for box constraints reduces to
$\partial_i \hat f(z^\star)=0$, i.e.,
\[
0=\partial_i \hat f(z^\star)=w_i+\gamma\,\partial_i V(z^\star).
\]
Let $k:=\partial_i V(z^\star)\in\mathbb{Z}$. Then $w_i=-\gamma k$ and thus $|w_i|=|\gamma|\,|k|$.
Under the assumption $|\gamma|>\max_{i\in \mathcal{C}}|w_i|$, we have $|\gamma|>|w_i|$; hence the equality
$|w_i|=|\gamma|\,|k|$ forces $k=0$.
But then $0=w_i+\gamma k=w_i$, contradicting $i\in \mathcal{C}$ (i.e., $w_i\neq 0$).
Therefore no core coordinate can be interior, and $z_i^\star\in\{0,1\}$ for all $i\in \mathcal{C}$.
\end{proof}

\section{Proof of Thm~\ref{thm:feasibility_repair}}
\label{app:3}
\begin{proof}
Let $z^\star$ be a local minimizer of
\[
\hat f(z)=w^\top z+\gamma V(z)
\]
over $[0,1]^N$.
Assume for contradiction that the core of $z^\star$ is infeasible.

By assumption~(i), every local minimizer has binary core, so the core of $z^\star$ is binary and infeasible.
Hence assumption~(ii) applies:
there exists a repair direction
\[
d\in \mathcal D\cap T(z^\star)
\]
such that
\[
\langle \nabla V(z^\star), d\rangle \le -\delta.
\]
By the definition of $L$,
\[
\langle w,d\rangle \le L.
\]
Therefore
\[
\langle \nabla \hat f(z^\star), d\rangle
=
\langle w,d\rangle+\gamma\langle \nabla V(z^\star), d\rangle
\le L-\gamma\delta.
\]
If $\gamma>L/\delta$, then
\[
\langle \nabla \hat f(z^\star), d\rangle<0.
\]

Since $d\in T(z^\star)$, the standard first-order necessary condition for a local minimizer over a box implies
\[
\langle \nabla \hat f(z^\star), d\rangle\ge 0,
\]
a contradiction.
Hence the core of $z^\star$ must be feasible.
\end{proof}

\section{Proof of Proposition~\ref{prop:op_ancestor_feasibility}}
\label{app:4}
\begin{proof}
Assume core binarity holds. 
Let $x \in \{0,1\}^n$ be an infeasible binary point. 
Let $S = \{i : x_i = 1\}$ be the set of extracted blocks. 
Since $x$ is infeasible, there exists a block $c \in S$ that violates a precedence constraint; that is, there is an ancestor $\alpha \in Anc(c)$ such that $x_\alpha = 0$. 
This implies $\sum_{\alpha \in Anc(c)} (1 - x_\alpha) \ge 1$.

We select a specific violating node $c^*$ to repair. 
Let $V_{viol} = \{c \in S : \exists \alpha \in Anc(c), x_\alpha = 0\}$ be the set of extracted blocks with missing ancestors. 
Choose $c^*$ to be a \emph{maximal} element of $V_{viol}$ with respect to the graph topology (i.e., $c^*$ has no descendants in $V_{viol}$). 
In fact, we can choose $c^*$ to be maximal in $S$; if a node is in $S$, any descendant in $S$ implies the ancestor is satisfied relative to the descendant, but we simply need a node $c^* \in S$ such that no descendant of $c^*$ is in $S$. 
If $x$ is infeasible, we can always find a "deepest" active node $c^*$ that violates ancestry constraints (or whose ancestors do) such that turning $c^*$ off does not create new violations for its descendants (because it has none active).

Consider the direction $d = -e_{c^*} \in D \cap T(x)$, which corresponds to un-extracting block $c^*$.
The gradient of the penalty is:
\[
\partial_i V_{\mathrm{anc}}(x) = \sum_{\alpha \in Anc(i)} (1 - x_\alpha) - \sum_{k : i \in Anc(k)} x_k.
\]
Evaluating this at $c^*$:
1. The ancestor term $\sum_{\alpha \in Anc(c^*)} (1 - x_\alpha) \ge 1$ because $c^*$ is infeasible.
2. The descendant term $\sum_{k : c^* \in Anc(k)} x_k = 0$ because $c^*$ was chosen such that no active descendants exist (or we recursively remove them). 
Specifically, in the ancestor formulation, removing a node $c$ removes the requirement for its ancestors to be present for $c$, decreasing the penalty. The only penalty increase could come from $c$ being an ancestor to some other active $k$. However, if we choose $c^*$ such that it has no active descendants, this term is 0.

Thus, $\partial_{c^*} V_{\mathrm{anc}}(x) \ge 1$.
The directional derivative is:
\[
\langle \nabla V_{\mathrm{anc}}(x), d \rangle = - \partial_{c^*} V_{\mathrm{anc}}(x) \le -1.
\]
This confirms local repairability with margin $\delta = 1$.
The linear gain bound is $L = \sup_{d \in D} \langle -w, d \rangle = \sup_i w_i$.
By Theorem~\ref{thm:feasibility_repair}, any $\gamma > (\max_i w_i) / 1$ guarantees feasibility.
\end{proof}
\section{Proof of Lemma~\ref{lem:op_parent_strict_local_min}}
\label{app:5}
\begin{proof}
First note $x^\star$ is infeasible since $r\to a$ is violated: $x^\star_a(1-x^\star_r)=1$.
Also
\[
V_{\mathrm{par}}(x^\star)=x^\star_a(1-x^\star_r)+x^\star_b(1-x^\star_a)=1.
\]

\emph{Step 1: The parent penalty is first-order flat at $x^\star$.}
A direct differentiation gives
\[
\partial_g V_{\mathrm{par}}=-x_r,\quad
\partial_r V_{\mathrm{par}}=(1-x_g)-x_a,\quad
\partial_a V_{\mathrm{par}}=(1-x_r)-x_b,\quad
\partial_b V_{\mathrm{par}}=(1-x_a),
\]
hence $\nabla V_{\mathrm{par}}(x^\star)=0$.

\emph{Step 2: Tangent-cone directions have a uniform first-order increase from the linear term.}
The tangent cone of the box at $x^\star$ is
\[
T(x^\star)=\{h\in\mathbb{R}^4:\ h_g\ge 0,\ h_r\ge 0,\ h_a\le 0,\ h_b\le 0\}.
\]
For any $h\in T(x^\star)$,
\[
\langle \nabla(-w^\top x)\big|_{x=x^\star},\, h\rangle
= \langle -w, h\rangle
= (-w_g)h_g+(-w_r)h_r+(-w_a)h_a+(-w_b)h_b
= h_g+h_r-h_a-h_b
= \|h\|_1.
\]
In particular, $\langle -w, h\rangle \ge \|h\|_2$ for all $h\in T(x^\star)$.

\emph{Step 3: The penalty changes only at second order near $x^\star$.}
Since $V_{\mathrm{par}}$ is a quadratic polynomial, Taylor's theorem is exact:
\[
V_{\mathrm{par}}(x^\star+h)-V_{\mathrm{par}}(x^\star)
= \langle \nabla V_{\mathrm{par}}(x^\star), h\rangle + \tfrac12 h^\top H h
= \tfrac12 h^\top H h,
\]
where the Hessian $H$ is constant with off-diagonal entries only on the edges
$(g,r),(r,a),(a,b)$ and zeros on the diagonal. In particular, $\|H\|_2\le 2$ (each row has at most
two nonzeros of magnitude $1$), so
\[
V_{\mathrm{par}}(x^\star+h)-V_{\mathrm{par}}(x^\star)
\ge -\tfrac12\|H\|_2\|h\|_2^2 \ge -\|h\|_2^2.
\]

\emph{Step 4: Strict local minimality.}
Combining the above,
\[
\hat f(x^\star+h)-\hat f(x^\star)
= \langle -w, h\rangle + \gamma\big(V_{\mathrm{par}}(x^\star+h)-V_{\mathrm{par}}(x^\star)\big)
\ge \|h\|_2 - \gamma\|h\|_2^2.
\]
Thus for any nonzero feasible perturbation with $\|h\|_2 < 1/\gamma$ we have
$\hat f(x^\star+h)>\hat f(x^\star)$, proving that $x^\star$ is a \emph{strict}
local minimizer over $[0,1]^4$.

Finally, note this also aligns with the box first-order necessary condition:
at a local minimizer, every feasible direction $d\in T(x^\star)$ must satisfy
$\langle \nabla \hat f(x^\star), d\rangle \ge 0$; here
$\nabla \hat f(x^\star)=-w$ and the inequality holds strictly for all $d\in T(x^\star)\setminus\{0\}$.
\end{proof}

\section{Proof of Thm~\ref{thm:pack_feasibility}}
\label{app:6}
\begin{proof}
Let $(x^\star,y^\star)$ be a PGD local minimizer of \eqref{eq:single_knapsack_relaxed}.
By assumption, its core $x^\star$ is binary.
Assume for contradiction that $(x^\star,y^\star)$ is infeasible for \eqref{eq:single_knapsack_ip}.
Then
\[
a^\top x^\star>b.
\]
Since $a^\top x^\star$ and $b$ are integers, we have
\[
a^\top x^\star\ge b+1.
\]
Because the slack encoding satisfies $s(y^\star)\ge 0$, the residual obeys
\[
r(x^\star,y^\star)
=
a^\top x^\star+s(y^\star)-b
\ge 1.
\]

Since $x^\star$ is binary and infeasible, there exists an index $i$ such that
\[
x_i^\star=1
\qquad\text{and}\qquad
a_i=1.
\]
Consider the feasible direction
\[
d:=(-e_i,0),
\]
which turns off this active core variable and leaves the slack variables unchanged.
Because $x_i^\star=1$, we have $d\in T(x^\star,y^\star)$.

Now
\[
V_{\mathrm{be}}(x,y)=r(x,y)^2-\sum_{j=1}^n a_j x_j^2+\sum_{j=1}^n a_j x_j,
\]
so for any $i$,
\[
\frac{\partial V_{\mathrm{be}}}{\partial x_i}(x,y)
=
2a_i\,r(x,y)-2a_i x_i+a_i.
\]
Evaluating at $(x^\star,y^\star)$ and using $a_i=1$, $x_i^\star=1$, and $r(x^\star,y^\star)\ge 1$, we obtain
\[
\frac{\partial V_{\mathrm{be}}}{\partial x_i}(x^\star,y^\star)
=
2\,r(x^\star,y^\star)-1
\ge 1.
\]
Hence
\[
\langle \nabla V_{\mathrm{be}}(x^\star,y^\star), d\rangle
=
-\frac{\partial V_{\mathrm{be}}}{\partial x_i}(x^\star,y^\star)
\le -1.
\]
So the penalty is locally repairable with margin $\delta=1$ relative to the direction set
\[
\mathcal D:=\{(-e_i,0):\, i=1,\dots,n\}.
\]

For the linear term in \eqref{eq:single_knapsack_relaxed},
\[
\langle (-w,0), d\rangle = w_i \le \max_j w_j.
\]
Thus the constant in Theorem~\ref{thm:feasibility_repair} satisfies
\[
L\le \max_j w_j.
\]
If $\gamma>\max_j w_j$, then in particular $\gamma>L/\delta=L$, since $\delta=1$.
Therefore Theorem~\ref{thm:feasibility_repair} implies that every PGD local minimizer of
\eqref{eq:single_knapsack_relaxed} is feasible for \eqref{eq:single_knapsack_ip}.
\end{proof}

\section{Multi-constraint Knapsack}
\label{app:mkp}

We consider the general multi-constraint knapsack problem
\begin{equation}
\label{eq:multiknapsack_ip}
\begin{aligned}
\max\ &\sum_{i=1}^n w_i x_i \\
\text{s.t.}\;& Ax\le b,\qquad
A\in\mathbb{Z}_{>0}^{K\times n},\quad
b\in\mathbb{Z}_{>0}^{K},\quad
x\in\{0,1\}^n.
\end{aligned}
\end{equation}
As before, we introduce an independent block of slack bits for each constraint:
\[
s_k(y^{(k)}) := \sum_{\ell=0}^{M_k} 2^\ell\, y^{(k)}_\ell, \
r_k(x,y^{(k)}) := A_{k,:}x + s_k(y^{(k)}) - b_k.
\]
A natural squared-residual construction is then
\[
\sum_{k=1}^K r_k(x,y^{(k)})^2.
\]

For general coefficients and multiple constraints, however,
binary-equivalent diagonal cancellation is no longer enough to guarantee
that every infeasible binary point admits a descent direction.
Indeed, the cancellation
\begin{equation}
\label{eq:V_multi_normal}
V^k_{\mathrm{be}}(x,y)
:=
r_k(x,y^{(k)})^2
-\sum_{i=1}^n A_{k,i}^2\bigl(x_i^2-x_i\bigr)
\end{equation}
removes the core diagonal terms and preserves the value on binary points,
but creates local minimizers at infeasible points.

This motivates the over-corrected penalty
\begin{equation}
\label{eq:V_multi}
V^k_{\mathrm{over}}(x,y)
:=
r_k(x,y^{(k)})^2
-\sum_{i=1}^n A_{k,i}^2\bigl(x_i^2-2x_i\bigr),
\end{equation}
and the aggregated objective
\begin{equation}
\label{eq:multiknapsack_relaxed}
 \hat f(x,y):=
-w^\top x+\gamma\sum_{k=1}^K V^k_{\mathrm{over}}(x,y).
\end{equation}
Relative to the binary-equivalent correction \eqref{eq:V_multi_normal},
the over-correction \eqref{eq:V_multi} reshapes the relaxed landscape
to favor one-bit repairs at violated constraints, and feasibility follows. In particular, at an infeasible binary point, the derivative of the $k$th bracket
with respect to an active variable $x_i=1$ contains the term
$2A_{k,i}r_k(x,y^{(k)})$, which is strictly positive whenever the $k$th constraint is violated.
Thus turning off a variable in a violated row becomes a natural repair move,
whereas the binary-equivalent correction may admit flat or adverse first-order behavior.

We emphasize the distinction between the two constructions:
binary-equivalent cancellation preserves the discrete objective values on binary points,
while over-correction is introduced specifically to improve local repairability in the relaxation.
\section{Proof of Thm~\ref{thm:cyclecover_feas}}
\label{app:7}

\begin{proof}
Work with the objective in \eqref{eq:cyclecover_relaxed}, and let \(\tilde c_{ij}:=c_{ij}+\varepsilon\). Let $x^\star$ be a PGD local minimum of $\hat f_{\mathrm{deg}}$ over the box.
By the assumed bound $\gamma>c_{\max}$ and Theorem~\ref{thm:binarity_diagfree}, the core is binary, so
\[
x^\star_{ij}\in\{0,1\}
\qquad
\forall\, i,j.
\]

Define the row and column residuals
\[
\rho_i(x):=\sum_{j=1}^n x_{ij}-1,
\qquad
\kappa_j(x):=\sum_{i=1}^n x_{ij}-1.
\]
Since $x^\star$ is binary, every $\rho_i(x^\star)$ and $\kappa_j(x^\star)$ is an integer.

A direct differentiation of \eqref{eq:V_deg} gives, for every pair $(i,j)$,
\begin{equation}
\label{eq:gradV_deg_app}
\frac{\partial V_{\mathrm{deg}}}{\partial x_{ij}}(x)
=
2\rho_i(x)+2\kappa_j(x)-4x_{ij}+2.
\end{equation}

Because $x^\star$ is a local minimum over the box, the coordinatewise first-order conditions are
\begin{equation}
\label{eq:box_KKT_sign}
x^\star_{ij}=1 \Rightarrow \frac{\partial \hat f_{\mathrm{deg}}}{\partial x_{ij}}(x^\star)\le 0,
\qquad
x^\star_{ij}=0 \Rightarrow \frac{\partial \hat f_{\mathrm{deg}}}{\partial x_{ij}}(x^\star)\ge 0.
\end{equation}
Indeed, when $x^\star_{ij}=1$, the direction $-e_{ij}$ is feasible; when $x^\star_{ij}=0$, the direction $+e_{ij}$ is feasible.

\paragraph{Step 1: no row can have out-degree at least $2$.}
Assume for contradiction that there exists a row $i$ with
\[
\sum_{j=1}^n x^\star_{ij}\ge 2,
\]
equivalently $\rho_i(x^\star)\ge 1$.
Choose any $j$ such that $x^\star_{ij}=1$.
Then the column sum of column $j$ is at least $1$, so
\[
\kappa_j(x^\star)\ge 0.
\]
Using \eqref{eq:box_KKT_sign} and \eqref{eq:gradV_deg_app} at this selected entry,
\[
0
\ge
\frac{\partial \hat f_{\mathrm{deg}}}{\partial x_{ij}}(x^\star)
=
\tilde c_{ij}
+\gamma\Big(2\rho_i(x^\star)+2\kappa_j(x^\star)-4\cdot 1+2\Big),
\]
that is,
\[
0
\ge
\tilde c_{ij}+2\gamma\big(\rho_i(x^\star)+\kappa_j(x^\star)-1\big).
\]
Since $\tilde c_{ij}>0$, this implies
\[
\rho_i(x^\star)+\kappa_j(x^\star)-1<0.
\]
Because the left-hand side is an integer, in fact
\[
\rho_i(x^\star)+\kappa_j(x^\star)-1\le -1,
\qquad\text{so}\qquad
\rho_i(x^\star)+\kappa_j(x^\star)\le 0.
\]
But $\rho_i(x^\star)\ge 1$ and $\kappa_j(x^\star)\ge 0$, hence
\[
\rho_i(x^\star)+\kappa_j(x^\star)\ge 1,
\]
a contradiction.
Therefore every row has out-degree at most $1$.

\paragraph{Step 2: no row can have out-degree $0$.}
Assume for contradiction that there exists a row $i$ with
\[
\sum_{j=1}^n x^\star_{ij}=0,
\]
equivalently $\rho_i(x^\star)=-1$.
By Step~1, every row has out-degree at most $1$.
Hence the total number of selected edges is at most $n-1$:
\[
\sum_{i=1}^n\sum_{j=1}^n x^\star_{ij}\le n-1.
\]
Therefore the total column sum is also at most $n-1$, so at least one column $\ell$ must have in-degree $0$:
\[
\sum_{i=1}^n x^\star_{i\ell}=0,
\qquad\text{equivalently}\qquad
\kappa_\ell(x^\star)=-1.
\]
For this pair $(i,\ell)$ we have $x^\star_{i\ell}=0$.
Applying \eqref{eq:box_KKT_sign} and \eqref{eq:gradV_deg_app},
\[
0
\le
\frac{\partial \hat f_{\mathrm{deg}}}{\partial x_{i\ell}}(x^\star)
=
\tilde c_{i\ell}
+\gamma\Big(2(-1)+2(-1)-4\cdot 0+2\Big)
=
\tilde  c_{i\ell}-2\gamma.
\]
But $\tilde  c_{i\ell}\le c_{\max}$ and $\gamma>c_{\max}$ imply
\[
\tilde c_{i\ell}-2\gamma < c_{\max}-2\gamma <0,
\]
a contradiction.
Hence no row can have out-degree $0$.

\paragraph{Conclusion for rows.}
Every row has out-degree neither $\ge 2$ nor $0$, hence every row has out-degree exactly $1$:
\[
\sum_{j=1}^n x^\star_{ij}=1
\qquad
\forall i.
\]

\paragraph{Conclusion for columns.}
The argument is symmetric in rows and columns.
Equivalently, one may repeat the above proof with the roles of $\rho$ and $\kappa$ interchanged.
Thus every column has in-degree exactly $1$:
\[
\sum_{i=1}^n x^\star_{ij}=1
\qquad
\forall j.
\]

Therefore $x^\star$ satisfies all assignment equalities and is feasible for \eqref{eq:cyclecover_ip}.
\end{proof}

\section{Time-indexed TSP (quadratic objective)}
\label{app:tsp_time}

This appendix section records a quadratic-core extension of the main framework:
the time-indexed formulation of TSP has a bilinear travel-cost objective.
We show that, after a square-free diagonal cancellation of the assignment penalties and
a small tie-break, the same binarity/feasibility mechanism applies.

\subsection{Problem and square-free penalty}

Let $x_{i,t}\in\{0,1\}$ indicate that city $i\in\{1,\dots,n\}$ is visited at time
$t\in\{1,\dots,n\}$. We interpret time cyclically: $t+1$ is modulo $n$
(equivalently, define $x_{\cdot,n+1}:=x_{\cdot,1}$ and $x_{\cdot,0}:=x_{\cdot,n}$).
The integer program is
\begin{equation}
\label{eq:app_time_tsp_ip}
\begin{aligned}
\min_{x}\quad
&Q(x):=\sum_{t=1}^{n}\sum_{i=1}^{n}\sum_{j=1}^{n} c_{ij}\,x_{i,t}\,x_{j,t+1}\\
\text{s.t.}\quad
&\sum_{i=1}^{n} x_{i,t}=1,\qquad t=1,\dots,n,\\
&\sum_{t=1}^{n} x_{i,t}=1,\qquad i=1,\dots,n,\\
&x_{i,t}\in\{0,1\},\qquad i=1,\dots,n,\ t=1,\dots,n.
\end{aligned}
\end{equation}
Any feasible $x$ is a permutation matrix: it selects exactly one city per time and exactly one time per city,
and thus encodes a Hamiltonian tour.

\paragraph{Square-free penalty for assignment equalities.}
Define residuals
\[
\rho_t(x):=\sum_{i=1}^n x_{i,t}-1,\qquad
\kappa_i(x):=\sum_{t=1}^n x_{i,t}-1.
\]
The raw squared penalty $\sum_t\rho_t(x)^2+\sum_i\kappa_i(x)^2$ contains diagonal terms $x_{i,t}^2$.
Since each $x_{i,t}$ appears in exactly one $\rho_t$ and one $\kappa_i$, its diagonal coefficient in the
raw squared penalty is $2$, so we apply a binary-equivalent diagonal cancellation and define
\begin{equation}
\label{eq:app_V_time}
V_{\mathrm{time}}(x)
:= \sum_{t=1}^n \rho_t(x)^2 + \sum_{i=1}^n \kappa_i(x)^2
\;-\;2\sum_{i=1}^n\sum_{t=1}^n x_{i,t}^2
\;+\;2\sum_{i=1}^n\sum_{t=1}^n x_{i,t}.
\end{equation}
By construction, $V_{\mathrm{time}}$ is quadratic and has \emph{zero diagonal} in the variables $x_{i,t}$.

\paragraph{Relaxed penalized objective with a tie-break.}
Fix a small $\varepsilon>0$ and consider the box relaxation
\begin{equation}
\label{eq:time_tsp_relax}
\min_{x\in[0,1]^{n^2}}\ 
\hat f_{\mathrm{time}}(x)
:= Q(x)\;+\;\varepsilon\sum_{i=1}^n\sum_{t=1}^n x_{i,t}\;+\;\gamma\,V_{\mathrm{time}}(x).
\end{equation}
On the feasible set of \eqref{eq:app_time_tsp_ip} we have $\sum_{i,t}x_{i,t}=n$, so the tie-break term
$\varepsilon\sum_{i,t}x_{i,t}$ is constant and does not re-rank feasible tours.

\subsection{Useful identities and bounds}

Assume $c_{ij}\ge 0$. Differentiating $Q$ under cyclic time gives, for each $(i,t)$,
\begin{equation}
\label{eq:gradQ_time}
\frac{\partial Q}{\partial x_{i,t}}(x)
=\sum_{j=1}^n c_{ij}\,x_{j,t+1}\;+\;\sum_{k=1}^n c_{k i}\,x_{k,t-1}\ \ge 0.
\end{equation}
Hence, for $x\in[0,1]^{n^2}$,
\begin{equation}
\label{eq:C_time}
0\le \frac{\partial Q}{\partial x_{i,t}}(x)\le
\sum_{j=1}^n c_{ij}+\sum_{k=1}^n c_{k i}
\quad\Rightarrow\quad
\|\nabla Q(x)\|_{\infty}\le C_{\max},
\end{equation}
where
\begin{equation}
\label{eq:Cmax_def_time}
C_{\max}:=\max_{i\in\{1,\dots,n\}}\Big(\sum_{j=1}^n c_{ij}+\sum_{k=1}^n c_{k i}\Big).
\end{equation}
Differentiating \eqref{eq:app_V_time} yields, for every $(i,t)$,
\begin{equation}
\label{eq:gradV_time}
\frac{\partial V_{\mathrm{time}}}{\partial x_{i,t}}(x)
=2\rho_t(x)+2\kappa_i(x)-4x_{i,t}+2.
\end{equation}

\subsection{A penalty-structure lemma}

Let $J(x):=\{(i,t):0<x_{i,t}<1\}$ be the interior index set.

\begin{lemma}[Row/column interior independence for $V_{\mathrm{time}}$]
\label{lem:time_interior_independence}
Let $x^\star$ be a local minimizer of $\hat f_{\mathrm{time}}$ over $[0,1]^{n^2}$.
Then $J(x^\star)$ contains no two indices in the same time slice or the same city:
\[
(i,t),(i',t)\in J(x^\star)\Rightarrow i=i',
\qquad
(i,t),(i,t')\in J(x^\star)\Rightarrow t=t'.
\]
Equivalently: for each fixed $t$, at most one $x^\star_{i,t}$ is interior; and for each fixed $i$,
at most one $x^\star_{i,t}$ is interior.
\end{lemma}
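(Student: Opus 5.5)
The plan is to repeat the second-order argument from the proof of Lemma~\ref{lem:interior_independence}, now keeping track of the bilinear travel cost $Q$. Let $x^\star$ be a local minimizer of $\hat f_{\mathrm{time}}$ over $[0,1]^{n^2}$. Suppose, for contradiction, that two distinct indices $p=(i,t)$ and $p'=(i',t')$ both lie in $J(x^\star)$, and that they share either a time slice ($t=t'$, $i\neq i'$) or a city ($i=i'$, $t\neq t'$). Since both coordinates are strictly inside $(0,1)$, some small disc of perturbations $x^\star+s e_p+u e_{p'}$ stays in the box. Hence $(0,0)$ is an unconstrained local minimizer of the $C^2$ function
\[
\phi(s,u):=\hat f_{\mathrm{time}}(x^\star+s e_p+u e_{p'}),
\]
and so the $2\times 2$ Hessian $H=\nabla^2\phi(0,0)$ must be positive semidefinite.

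Next I compute $H$ entry by entry. The tie-break term $\varepsilon\sum x_{i,t}$ is linear and contributes nothing.

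\emph{Diagonal entries.} $V_{\mathrm{time}}$ has zero diagonal by construction: $\partial^2 V_{\mathrm{time}}/\partial x_{i,t}^2=2+2-4=0$. The cost $Q$ also has no $x_{i,t}^2$ term, because it only couples time $t$ with time $t+1$ cyclically, and $t+1\neq t$ for $n\ge 3$. So both diagonal entries of $H$ vanish.

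\emph{Off-diagonal entry from the penalty.} If $p,p'$ share a time slice, the term $\rho_t^2$ contains $2x_{i,t}x_{i',t}$, while the city residuals $\kappa$ do not couple them. This gives $\partial^2 V_{\mathrm{time}}/\partial x_p\partial x_{p'}=2$. The same-city case is symmetric through $\kappa_i^2$. The diagonal-cancellation terms contribute nothing to mixed derivatives.

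\emph{Off-diagonal entry from the cost.} In the same-time case $Q$ has no $x_{i,t}x_{i',t}$ term, so the contribution is $0$. In the same-city case $Q$ contributes $q:=c_{ii}\bigl([t'=t+1]+[t'=t-1]\bigr)\ge 0$, using $c_{ij}\ge 0$.

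Altogether,
\[
H=\begin{bmatrix}0 & 2\gamma+q\\ 2\gamma+q & 0\end{bmatrix},\qquad q\ge 0,\ \gamma>0.
\]
Then $\det H=-(2\gamma+q)^2<0$, so $H$ is indefinite. This contradicts the second-order necessary condition, and the lemma follows.

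The main obstacle is the careful bookkeeping of the mixed second derivatives. Specifically, I need to verify that no term of $Q$ or of the cancellation terms produces a diagonal entry, and that the only possible cost coupling, the self-transition cost $c_{ii}$ between consecutive times of one city, has a sign that adds to the penalty coupling $2\gamma$ rather than cancelling it. The assumption $c_{ij}\ge 0$ is exactly what guarantees this. If one wanted to drop it, it would suffice to take $\gamma>\max_i|c_{ii}|/2$.
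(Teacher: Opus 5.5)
Your proof is correct and follows the paper's argument: restrict to the two interior coordinates, show the $2\times 2$ Hessian has zero diagonal and a nonzero off-diagonal entry, and contradict the second-order necessary condition. You also treat the same-city case explicitly, where $Q$ contributes the self-transition coupling $c_{ii}$ between $x_{i,t}$ and $x_{i,t\pm 1}$ and $c_{ii}\ge 0$ keeps it from cancelling $2\gamma$; this is more careful than the paper, which proves only the time-slice case and calls the city case similar.
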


\begin{proof}
We prove the time-slice claim; the city claim can be proved in a similar fashion.
Assume for contradiction that $(i,t)$ and $(i',t)$ are both interior with $i\neq i'$.
Consider the restriction of $\hat f_{\mathrm{time}}$ to the 2D affine subspace
$x^\star+s e_{i,t}+r e_{i',t}$ for small $(s,r)$.
Because $x^\star_{i,t},x^\star_{i',t}\in(0,1)$, for sufficiently small $(s,r)$ this stays in the box.

The penalty $V_{\mathrm{time}}$ contains the term $\rho_t(x)^2$ where
$\rho_t(x)=\sum_k x_{k,t}-1$.
Hence $\partial^2 \rho_t(x)^2/\partial x_{i,t}\partial x_{i',t}=2$ for $i\neq i'$.
The diagonal cancellation in \eqref{eq:app_V_time} removes only diagonal terms in $x_{k,t}^2$ and does not
change this mixed second derivative.
Therefore the $2\times 2$ Hessian block of $V_{\mathrm{time}}$ on coordinates
$(x_{i,t},x_{i',t})$ has the form
\[
\nabla^2_{(i,i'),t} V_{\mathrm{time}}=
\begin{bmatrix}
0 & 2\\
2 & 0
\end{bmatrix},
\]
which is indefinite.

The linear tie-break term $\varepsilon\sum_{i,t}x_{i,t}$ contributes no second derivatives. Moreover, the bilinear travel cost $Q(x)$ has no squared terms, so it contributes zero diagonal entries to this two-dimensional Hessian block. For two variables in the same time slice, $x_{i,t}$ and $x_{i',t}$, the travel cost $Q$ does not couple them directly, since $Q$ only couples adjacent time slices. Hence the Hessian block of the full objective on these two coordinates is
\[
\gamma
\begin{bmatrix}
0 & 2\\
2 & 0
\end{bmatrix},
\]
which has determinant $-4\gamma^2<0$. This contradicts the second-order necessary condition for a local minimizer at an interior point. Hence no two distinct indices in the same time slice can both be interior.
\end{proof}

\subsection{Core binarity for time-indexed TSP}

\begin{theorem}[Core binarity for time-indexed TSP]
\label{thm:time_tsp_binarity}
Assume $c_{ij}\ge 0$ and fix $\varepsilon>0$.
If
\[
\gamma > C_{\max}+\varepsilon,
\]
then every local minimizer $x^\star$ of \eqref{eq:time_tsp_relax} satisfies
$x^\star_{i,t}\in\{0,1\}$ for all $i,t$.
\end{theorem}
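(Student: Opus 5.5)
The plan is to mimic the proof of Theorem~\ref{thm:binarity_diagfree}, with Lemma~\ref{lem:time_interior_independence} in place of Lemma~\ref{lem:interior_independence}. The bilinear cost $Q$ destroys integrality of the full gradient, but it only enters through a term of controlled sign and size. So instead of integrality of the whole gradient, I will use the sign bound \eqref{eq:gradQ_time} and the magnitude bound \eqref{eq:C_time}.

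Suppose for contradiction that $x^\star$ is a local minimizer with some interior coordinate $0<x^\star_{i,t}<1$. Since $\pm e_{i,t}$ are both feasible directions, first-order stationarity gives
\[
0=\frac{\partial Q}{\partial x_{i,t}}(x^\star)+\varepsilon+\gamma\,\frac{\partial V_{\mathrm{time}}}{\partial x_{i,t}}(x^\star).
\]
Next I evaluate the penalty derivative using \eqref{eq:gradV_time}. By Lemma~\ref{lem:time_interior_independence}, every other entry $x^\star_{k,t}$ with $k\neq i$ in time slice $t$ is binary, and so is every other entry $x^\star_{i,s}$ with $s\neq t$ for city $i$. Write $a:=\sum_{k\neq i}x^\star_{k,t}\in\mathbb Z$ and $b:=\sum_{s\neq t}x^\star_{i,s}\in\mathbb Z$. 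Then
\[
\frac{\partial V_{\mathrm{time}}}{\partial x_{i,t}}(x^\star)=2(a+x^\star_{i,t}-1)+2(b+x^\star_{i,t}-1)-4x^\star_{i,t}+2=2(a+b-1).
\]
The fractional value $x^\star_{i,t}$ cancels exactly because of the diagonal cancellation, so this derivative equals $2k$ for some $k\in\mathbb Z$.

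Set $g:=\partial Q/\partial x_{i,t}(x^\star)+\varepsilon$. By \eqref{eq:gradQ_time} and \eqref{eq:C_time}, $g\in[\varepsilon,\,C_{\max}+\varepsilon]$; this holds even if the neighbouring slices $t\pm1$ contain fractional entries, since only $x^\star\in[0,1]^{n^2}$ and $c_{ij}\ge 0$ are used. Stationarity reads $g=-2\gamma k$, and there are two cases.
\begin{itemize}
\item If $k=0$, then $g=0$, contradicting $g\ge\varepsilon>0$. This is exactly where the tie-break term is needed.
\item If $k\neq 0$, then $g=|2\gamma k|\ge 2\gamma>2(C_{\max}+\varepsilon)>C_{\max}+\varepsilon$, contradicting the upper bound on $g$.
\end{itemize}
Hence no coordinate is interior, and $x^\star$ is binary.

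The main point to verify is that Lemma~\ref{lem:time_interior_independence} really applies. The Hessian argument there needs $Q$ to contribute nothing to the relevant $2\times2$ blocks: no diagonal terms, and no coupling within a time slice or within a city row. Within a city row, $Q$ couples $x_{i,t}$ and $x_{i,t\pm1}$ through $c_{ii}$. This vanishes if $c_{ii}=0$; otherwise the block becomes $\begin{bmatrix}0 & 2\gamma+c_{ii}\\ 2\gamma+c_{ii} & 0\end{bmatrix}$, which is still indefinite. For $n\ge3$ the cyclic times $t\pm1$ differ from $t$, so $Q$ has no squared terms. Beyond checking these points, the argument is a short computation.
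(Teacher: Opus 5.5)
Your proof is correct and follows the paper's argument essentially step for step. Both use Lemma~\ref{lem:time_interior_independence} to make the rest of slice $t$ and row $i$ binary, show that $x^\star_{i,t}$ cancels in $\partial V_{\mathrm{time}}/\partial x_{i,t}$ to leave an even integer, and then run the two-case contradiction against $\varepsilon\le \partial Q/\partial x_{i,t}+\varepsilon\le C_{\max}+\varepsilon$. Your extra check is also sound: the $c_{ii}$ coupling between $x_{i,t}$ and $x_{i,t\pm1}$ leaves the city-row Hessian block indefinite, which fills in a detail the paper passes over with ``similar fashion.''
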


\begin{proof}
Let $x^\star$ be a local minimizer of \eqref{eq:time_tsp_relax}.
Assume toward a contradiction that $J:=J(x^\star)\neq\emptyset$, and pick an interior index $(i,t)\in J$.

By Lemma~\ref{lem:time_interior_independence}, every variable in the same time slice $t$ other than
$x^\star_{i,t}$ is binary, and every variable in the same city row $i$ other than $x^\star_{i,t}$
is binary. Hence $\sum_{k\neq i}x^\star_{k,t}\in\mathbb{Z}$ and $\sum_{s\neq t}x^\star_{i,s}\in\mathbb{Z}$.
Write
\[
\rho_t(x^\star)=\Big(\sum_{k\neq i}x^\star_{k,t}\Big)+x^\star_{i,t}-1,\qquad
\kappa_i(x^\star)=\Big(\sum_{s\neq t}x^\star_{i,s}\Big)+x^\star_{i,t}-1.
\]
Substituting these expressions into \eqref{eq:gradV_time}, the $x^\star_{i,t}$ terms cancel:
\[
\frac{\partial V_{\mathrm{time}}}{\partial x_{i,t}}(x^\star)
=2\sum_{k\neq i}x^\star_{k,t}+2\sum_{s\neq t}x^\star_{i,s}-2 \ \in\ 2\mathbb{Z}.
\]
Since $x^\star_{i,t}$ is interior, first-order stationarity yields
\[
0=\frac{\partial \hat f_{\mathrm{time}}}{\partial x_{i,t}}(x^\star)
=\frac{\partial Q}{\partial x_{i,t}}(x^\star)+\varepsilon
+\gamma\frac{\partial V_{\mathrm{time}}}{\partial x_{i,t}}(x^\star).
\]
By \eqref{eq:gradQ_time}, $\frac{\partial Q}{\partial x_{i,t}}(x^\star)\ge 0$, hence the first two terms satisfy
$\frac{\partial Q}{\partial x_{i,t}}(x^\star)+\varepsilon>0$.
Also by \eqref{eq:C_time},
\[
0<\frac{\partial Q}{\partial x_{i,t}}(x^\star)+\varepsilon\le C_{\max}+\varepsilon.
\]
If $\frac{\partial V_{\mathrm{time}}}{\partial x_{i,t}}(x^\star)\neq 0$, then since it is an even integer,
\[
\left|\gamma\frac{\partial V_{\mathrm{time}}}{\partial x_{i,t}}(x^\star)\right|
\ge 2\gamma > 2(C_{\max}+\varepsilon) \ge \left|\frac{\partial Q}{\partial x_{i,t}}(x^\star)+\varepsilon\right|,
\]
so the sum cannot be zero, a contradiction. Therefore
$\frac{\partial V_{\mathrm{time}}}{\partial x_{i,t}}(x^\star)=0$, which forces
$\frac{\partial Q}{\partial x_{i,t}}(x^\star)+\varepsilon=0$, impossible because the left-hand side is strictly positive.
Thus $J=\emptyset$ and $x^\star$ is binary.
\end{proof}

\subsection{Feasibility of binary local minimizers}

\begin{theorem}[Feasibility of local minimizers for time-indexed TSP]
\label{thm:time_tsp_feas}
Assume $c_{ij}\ge 0$ and fix $\varepsilon>0$.
Suppose every local minimizer of \eqref{eq:time_tsp_relax} is binary.
If
\[
\gamma > \tfrac12\,(C_{\max}+\varepsilon),
\]
then every local minimizer $x^\star$ satisfies the assignment equalities
$\sum_i x^\star_{i,t}=1$ for all $t$ and $\sum_t x^\star_{i,t}=1$ for all $i$.
Consequently, $x^\star$ is feasible for \eqref{eq:app_time_tsp_ip} and encodes a Hamiltonian tour.
\end{theorem}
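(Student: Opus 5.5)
Since local minimizers are binary by assumption, the argument runs in parallel with the proof of Theorem~\ref{thm:cyclecover_feas}, using the coordinatewise first-order box conditions. Let $g_{i,t}(x):=\partial_{x_{i,t}}Q(x)+\varepsilon$. By \eqref{eq:C_time} we have $0<g_{i,t}\le C_{\max}+\varepsilon<2\gamma$. At a binary local minimizer $x^\star$ these conditions read
\[
x^\star_{i,t}=1\Rightarrow g_{i,t}+\gamma\,\partial_{i,t}V_{\mathrm{time}}\le 0,\qquad x^\star_{i,t}=0\Rightarrow g_{i,t}+\gamma\,\partial_{i,t}V_{\mathrm{time}}\ge 0,
\]
with $\partial_{i,t}V_{\mathrm{time}}=2\rho_t+2\kappa_i-4x_{i,t}+2$ from \eqref{eq:gradV_time}. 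At binary points all residuals are integers, so this derivative is an even integer.

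\textbf{Step 1 (no overfull slice or city).} Suppose $\rho_t(x^\star)\ge1$, and pick $i$ with $x^\star_{i,t}=1$. Then $\kappa_i\ge 0$, and the derivative equals $2(\rho_t+\kappa_i-1)\ge 0$. Since $g_{i,t}>0$, the condition $g_{i,t}+\gamma\cdot 2(\rho_t+\kappa_i-1)\le 0$ fails, a contradiction. The symmetric argument rules out $\kappa_i\ge1$. Note that this step needs only $g>0$, which the tie-break $\varepsilon$ guarantees.

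\textbf{Step 2 (no empty slice or city).} By Step 1, every time slice and every city has at most one active entry. Suppose some time slice $t$ is empty, so $\rho_t=-1$. Then the total number of ones is at most $n-1$, so counting by cities shows that some city $i$ is also empty, so $\kappa_i=-1$. At the pair $(i,t)$ we have $x^\star_{i,t}=0$ and derivative $2(-1)+2(-1)+2=-2$. The zero-side condition then requires $g_{i,t}-2\gamma\ge 0$, i.e.\ $\gamma\le \tfrac12(C_{\max}+\varepsilon)$, which contradicts the hypothesis. An empty city is handled symmetrically, producing an empty slice. Hence all $\rho_t=\kappa_i=0$, so $x^\star$ is a permutation matrix and therefore a Hamiltonian tour.

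\textbf{Main obstacle.} The key step is Step 2, which depends on the counting argument: an empty row must coexist with an empty column. This is what makes the threshold exactly $\tfrac12(C_{\max}+\varepsilon)$, so the constants must be tracked carefully there. The only other point to check is that the bound $g_{i,t}\le C_{\max}+\varepsilon$ from \eqref{eq:C_time} holds at binary points, and it does.
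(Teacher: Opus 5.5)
Your proposal is correct and follows the paper's proof essentially step for step. Step (A) removes a surplus entry along $-e_{i,t}$, which fails only because of $\varepsilon>0$. Step (B) uses the counting argument to find an empty slice--city pair and adds an entry along $+e_{k,t}$, where the penalty slope is $-2$, giving the threshold $\tfrac12(C_{\max}+\varepsilon)$. Your coordinatewise sign conditions are exactly the paper's directional-derivative conditions along $\mp e_{i,t}$, and the even-integer observation is harmless but not needed.
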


\begin{proof}
Let $x^\star$ be a binary local minimizer.
For a local minimizer over a box, every feasible direction $d$ in the tangent cone satisfies
$\langle \nabla \hat f_{\mathrm{time}}(x^\star), d\rangle\ge 0$.

\medskip
\noindent\textbf{(A) No surplus rows/columns.}
Suppose some time $t$ has $\sum_i x^\star_{i,t}\ge 2$, i.e.\ $\rho_t(x^\star)\ge 1$.
Pick any $i$ with $x^\star_{i,t}=1$ and take $d:=-e_{i,t}$ (feasible since $x^\star_{i,t}=1$).
Using \eqref{eq:gradV_time} and $\kappa_i(x^\star)\ge 0$ (city $i$ appears at least once when a $1$ occurs),
we have
\[
\frac{\partial V_{\mathrm{time}}}{\partial x_{i,t}}(x^\star)
=2\rho_t(x^\star)+2\kappa_i(x^\star)-4\cdot 1+2
=2\rho_t(x^\star)+2\kappa_i(x^\star)-2 \ \ge\ 0.
\]
In fact, since $\rho_t(x^\star)\ge 1$, the right-hand side is at least $0$ and is strictly positive unless
$\kappa_i(x^\star)=0$ and $\rho_t(x^\star)=1$; either way,
$\langle \nabla V_{\mathrm{time}}(x^\star),d\rangle=-\partial_{i,t}V_{\mathrm{time}}(x^\star)\le 0$.
Moreover, by \eqref{eq:gradQ_time}, $\langle \nabla Q(x^\star),d\rangle=-\partial_{i,t}Q(x^\star)\le 0$,
and $\langle \nabla(\varepsilon\sum x)(x^\star),d\rangle=-\varepsilon<0$.
Therefore $\langle \nabla \hat f_{\mathrm{time}}(x^\star), d\rangle<0$, a contradiction.
Hence every time $t$ satisfies $\sum_i x^\star_{i,t}\le 1$.
An identical argument shows every city $i$ satisfies $\sum_t x^\star_{i,t}\le 1$.

\medskip
\noindent\textbf{(B) No deficit rows/columns.}
If some time $t$ had $\sum_i x^\star_{i,t}=0$, then since every time has at most one $1$ by (A),
the total number of ones satisfies $\sum_{i,t}x^\star_{i,t}\le n-1$.
Therefore some city $k$ must have $\sum_t x^\star_{k,t}=0$ as well.
Pick such a deficit pair $(k,t)$ with $x^\star_{k,t}=0$ and take $d:=+e_{k,t}$ (feasible).
Then $\rho_t(x^\star)=\kappa_k(x^\star)=-1$, so \eqref{eq:gradV_time} gives
\[
\frac{\partial V_{\mathrm{time}}}{\partial x_{k,t}}(x^\star)
=2(-1)+2(-1)-4\cdot 0+2=-2,
\qquad\Rightarrow\qquad
\langle \nabla V_{\mathrm{time}}(x^\star),d\rangle=-2.
\]
Also, by \eqref{eq:C_time},
\[
\langle \nabla Q(x^\star),d\rangle=\frac{\partial Q}{\partial x_{k,t}}(x^\star)\le C_{\max},
\qquad
\langle \nabla(\varepsilon\sum x)(x^\star),d\rangle=\varepsilon.
\]
Therefore
\[
\langle \nabla \hat f_{\mathrm{time}}(x^\star), d\rangle
\le (C_{\max}+\varepsilon) - 2\gamma < 0,
\]
contradicting local optimality when $\gamma>\tfrac12(C_{\max}+\varepsilon)$.
Hence no deficit time can exist; similarly no deficit city can exist.

\medskip
Combining (A) and (B), every time has exactly one selected city and every city is selected at exactly one time,
so $x^\star$ satisfies the assignment equalities and is feasible for \eqref{eq:app_time_tsp_ip}.
\end{proof}

\begin{remark}[The tie-break is harmless]
On the feasible set of \eqref{eq:time_tsp_ip}, we always have $\sum_{i,t}x_{i,t}=n$, so the added term
$\varepsilon\sum_{i,t}x_{i,t}$ is constant and does not change which tour is optimal.
It only simplifies exclusion of degenerate stationary points in the box relaxation.
\end{remark}

\end{document}